\newtheorem{proposition}[theorem]{Proposition}
\title{Computing the Bounds of the Number of Reticulations in a Tree-Child Network That Displays a Set of Trees} 
\author{Yufeng Wu,$^{1\ast}$ Louxin Zhang ,$^{2}$\\
$^{1}$Department of Computer Science and Engineering, \\University of Connecticut,\\
Storrs, CT 06268, USA\\
$^{2}$ Department of Mathematics and\\
Center for Data Science and Machine Learning,\\ National University of Singapore, Singapore 119076\\
$^\ast$To whom correspondence should be addressed;\\
E-mail:  yufeng.wu@uconn.edu.
}
\begin{document} 

\maketitle 

\keywords{Phylogenetic network, Tree-child network, algorithm, phylogenetics}

\begin{abstract}
Phylogenetic network is an evolutionary model that uses a rooted directed acyclic graph (instead of a tree) to model an evolutionary history of species in which reticulate events (e.g., hybrid speciation or horizontal gene transfer) occurred. Tree-child network is a kind of phylogenetic network with structural constraints. Existing approaches for tree-child network reconstruction can be slow for large data. In this paper, we present several computational approaches for bounding from below the number of reticulations in a tree-child network that displays a given set of rooted binary phylogenetic trees. In addition, we also present some theoretical results on bounding from above the number of reticulations. Through simulation, we demonstrate that the new lower bounds on the reticulation number for tree-child networks can practically be computed for large tree data. The bounds can provide estimates of reticulation for relatively large data.
\end{abstract}

\iffalse

\section{Front Matter}

%Please follows the below tags for front matter part for your article:

\begin{verbatim}
\title{Bounding the Number of Reticulations in a Tree-Child Network That Displays a Set of Trees} 

\author{Yufeng Wu,$^{1\ast}$ Louxin Zhang ,$^{2}$\\
{$^{1}$Department of Computer Science and Engineering, University of Connecticut,}\\
{Storrs, CT 06268, USA}\\
{$^{2}$ Department of Mathematics and Center for Data Science and Machine Learning, National University of Singapore, Singapore 119076}\\
{$^\ast$To whom correspondence should be addressed;}\\
{E-mail:  yufeng.wu@uconn.edu.}
}


\maketitle
\end{verbatim}

\section{Abstract}

%Use the tag:

\begin{verbatim}
\begin{abstract}
%Modeling reticulate evolution is an important problem in evolutionary biology. 
Phylogenetic network is an evolutionary model that uses a rooted directed acyclic graph (instead of a tree) to model an evolutionary history of species in which reticulate events (e.g., hybrid speciation or horizontal gene transfer) occurred. 
%It is often computationally challenging to perform evolutionary analysis using phylogenetic networks. 
Tree-child network is a kind of phylogenetic network with structural constraints. 
%A main advantage of  tree-child network is that it usually leads to more efficient computation than general phylogenetic networks. 
Existing approaches for tree-child network reconstruction can be slow for large data. In this paper, we present several computational approaches for bounding from below the number of reticulations in a tree-child network that displays a given set of rooted binary phylogenetic trees. 
%There is an existing method for calculating such lower bounds for general phylogenetic networks. But this existing method does not scale well for large tree data. 
Through simulation, we demonstrate that the new lower bounds on the reticulation number for tree-child networks can practically be computed for large tree data. The bounds can provide estimates of reticulation for relatively large data.
  see here.
\end{abstract}
\end{verbatim}

\fi

\section{Introduction}

%Background on tree-child networks.
%%%% COPIED FROM THE OTHER PAPER

Phylogenetic network is an emerging evolutionary model for several complex evolutionary processes, including recombination, hybrid speciation, horizontal gene transfer and other reticulate events \citep{Gusfield_book,Huson_book}.  On the high level, phylogenetic network is a leaf-labeled rooted acyclic digraph. Different from phylogenetic tree model, a phylogenetic network can have nodes (called reticulate nodes) with in-degrees of two or larger. The presence of reticulate nodes greatly complicates the application of phylogenetic networks. The number of possible phylogenetic networks even with a small number of reticulate nodes is very large \citep{fuchs_21_JEC}. 
A common computational task related to an evolutionary model is the inference of the model (tree or network) from data. A set of phylogenetic trees is a common data for phylogenetic inference. An established research problem on phylogenetic networks is inferring a phylogenetic network as the \emph{consensus} of multiple phylogenetic trees where the network satisfies certain optimality conditions \citep{elworth2019,gunawan_galled}. Each phylogenetic tree is somehow ``contained" (or ``displayed") in the network. 
The problem of inferring a phylogenetic network from a set of phylogenetic trees is called the network reconstruction problem (also called hybridization network problem in the literature). We refer to the recent surveys \citep{Steel_book,Zhang_18} for the mathematical relation between trees and networks.

The network reconstruction problem has been actively studied recently in computational biology. There are two types of approaches for this problem: unconstrained network reconstruction and constrained network reconstruction. Unconstrained network reconstruction \citep{ZW2012,WUMIRRN16,WURN10,WURNJCB13} aims to reconstructing a network without additional topological constraints. While such approaches infer more general networks, they are often slow and difficult to scale to large data. Constrained network reconstruction imposes some type of topological constraints on the inferred network. Such constraints simplify the network structure and often lead to more efficient algorithms. There are various kinds of constraints studied in the literature. One popular constraint is requiring simplified cycle structure in networks (e.g., so-called galled tree \citep{Gusfield_04,Wang_01}. 

Another topological constraint, the so-called tree-child property \citep{Cardona_09b}, has been studied actively recently. A phylogenetic network is tree-child if every non-leaf node has at least one child that is of in-degree one. This property implies that every non-leaf node is connected to some leaf through a path that is not affected by the removal of any reticulate edge (edge going into a reticulate node; see Figure~\ref{Fig1_examples}).
A main benefit of tree-child network is that it can have more complex structure than say galled trees, and is therefore potentially more applicable. While tree-child networks have complex structure, they can efficiently be enumerated and counted by a simple recurrence formula \citep{pons2021_SR,Zhang_19} and so may likely allow faster computation for other tasks. 
There is a parametric algorithm for determining whether a set of multiple trees can be displayed in a tree-child network simultaneously  \citep{van2022practical}.

\begin{figure}[t]
\label{Fig1_examples}
%\caption{XXX}
\centering
\includegraphics[width=0.80\textwidth]{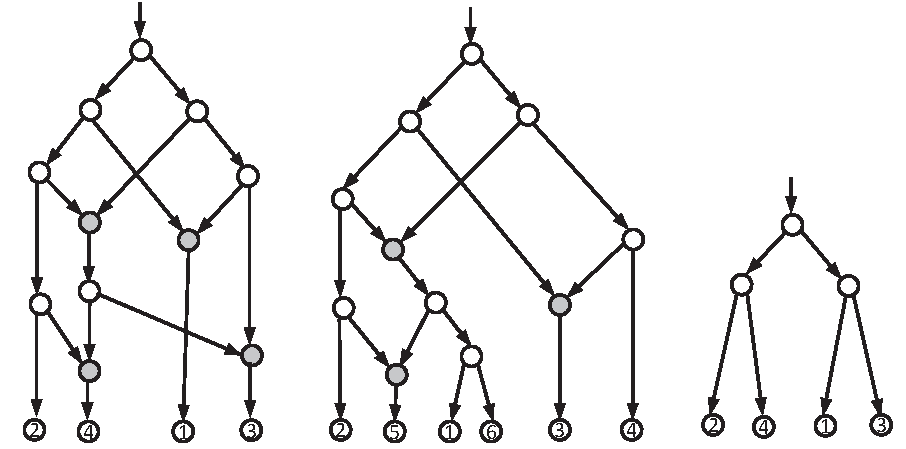}
\caption{An arbitrary phylogenetic network with four reticulate nodes (filled circles) on taxa 1, 2, 3, 4 (left), a tree-child network (middle) and a phylogenetic tree (right). Note: the tree on the right is displayed in the network to the left. }
\end{figure}

Given a phylogenetic network $N$, we say a phylogenetic tree $T$ (with the same set of taxa as $N$) is \textit{displayed} in $N$ if $T$ can be obtained by (i) first deleting all but one incoming edges at each reticulate node of $N$ (this leads to a tree), and then (ii) removing the degree-two nodes so that the resulting tree becomes a phylogenetic tree. As an example, in Figure \ref{Fig1_examples}, the tree on the right is displayed in the network on the left. Given a set of phylogenetic trees $\mathcal{T}$, we want to reconstruct a tree-child network such that it displays \textit{each} tree $T \in \mathcal{T}$ and its so-called reticulation number is the smallest among all such tree-child networks. Here, reticulation number is equal to the number of reticulate edges minus the number of reticulate nodes. The smallest reticulation number needed to display a set of trees $\mathcal{T}$ is called the tree-child reticulation number of $\mathcal{T}$ and is denoted as $\mathrm{TCR}_{\min}$.  Note that $\mathrm{TCR}_{\min}$ depends on $\mathcal{T}$. To simplify notations, we drop $\mathcal{T}$ from $\mathrm{TCR}_{\min}$ and the following lower bounds on $\mathrm{TCR}_{\min}$.
There exists no known polynomial-time algorithm for computing the exact $\mathrm{TCR}_{\min}$ for multiple trees.

Since computing the exact tree-child reticulation number $\mathrm{TCR}_{\min}$ of multiple trees is challenging, heuristics for estimating the range of $\mathrm{TCR}_{\min}$ have been developed. Existing heuristics aim at finding a tree-child network with the number of reticulation that is as close to $\mathrm{TCR}_{\min}$ as possible. At present, the best heuristics is ALTS \citep{Zhang_22}. ALTS can construct near-parsimonious tree-child networks for data that is infeasible for other existing methods. However, a main downside of ALTS is that it is a heuristic and so how close a network reconstructed by ALTS to the optimal one is unknown. Moreover, ALTS still cannot work on large data (say $50$ trees with $100$ taxa, and with relatively large number of reticulations).

We can view the network reconstruction heuristics as providing an \textit{upper bound} to the reticulation number. In order to gain more information on the reticulation number, a natural approach is computing a \textit{lower bound} on the reticulation number. Such lower bounds, if practically computable, can provide information on the range of the reticulation number. In some cases, if a lower bound matches the heuristically computed upper bound for some data, we can actually know the exact reticulation number \citep{WURN10}. Computing a tight lower bound on reticulation number, however, is not easy: to derive a lower bound one has to consider \emph{all} possible networks that display a set of trees $\mathcal{T}$; in contrast, computing an upper bound on reticulation number of $\mathcal{T}$  only requires one feasible network. For unconstrained networks with multiple trees, the only known non-trivial lower bound is the bound computed by PIRN \citep{WURN10}. While this bound performs well for relatively small data, it is computationally intensive to compute for large data. For tree-child networks, we are not aware of any published non-trivial lower bounds.

In this paper, we present several lower bounds on $\mathrm{TCR}_{\min}$. By simulation, we show that these lower bounds can be useful estimates of  $\mathrm{TCR}_{\min}$. In addition, we also present some theoretical results on upper bounds of $\mathrm{TCR}_{\min}$.

\subsection*{Background on tree-child network}
Throughout this paper, when we say network, we refer to tree-child network in which reticulate nodes can have two or more incoming reticulate edges (unless otherwise stated), which may not be binary. Edges in the network that are not reticulate edge are called tree edges. 
%Note that the network may not be binary. 
Trees are assumed to be rooted binary trees on the same taxa. 

 \textbf{The tree-child property}
A phylogenetic network is {\it tree-child} if every nonleaf node has at least one child that is a tree node.  In Figure~\ref{Fig1_examples}, the middle phylogenetic network is tree-child, whereas  the left network is not in which both the parent $u$ of the leaf 4 and the parent $v$ of the leaf 3 are reticulate and the node right above $u$ has $u$ and $v$ as its children.  One important property about tree-child network is that there is a directed path \textcolor{blue}{ consisting of } only tree edges from any node to some leaf (see e.g. \citet{Zhang_22}).

 \textbf{Network decomposition}
Consider a phylogenetic network $N$ with $k$ reticulate nodes. 
Let the root of $N$ be $r_0$ and the $k$ reticulate nodes be 
$r_1, r_2, \cdots, r_k$. 
For each $i$ from 0 to $k$, 
$r_i$ and its descendants that are connected to $r_i$ by a path consisting of only tree edges induces a subtree of $N$. Such $k+1$ subtrees are called the {\it tree components} of $N$ \citep{Gunawan_16_IC}. Note that the tree components are disjoint and the node set of $N$ is the union of the node sets of these tree components (see Figure \ref{Fig2_decomposition}).  Network decomposition is a powerful technique for studying the tree-child networks \citep{Cardona_20_JCSS,fuchs_21_JEC} and other network classes \citep{Gambette_15} (see \citet{Zhang_18} for a survey).

\begin{figure}[t]
%\caption{XXX}
\centering
\subfigure[Decompose into trees] % caption for subfigure a
{
    \label{Fig2_decomposition:trees}
%\scalebox{0.33}
{
  \includegraphics[width=1.52in]{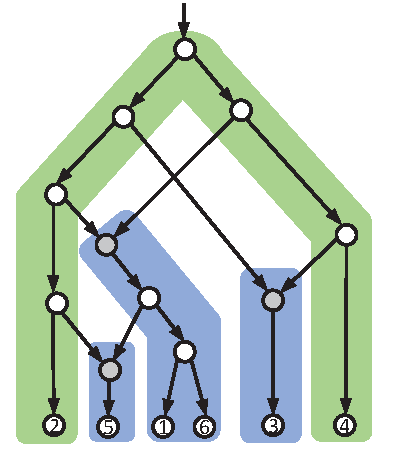}
}
}
\subfigure[Decompose into paths] % caption for subfigure a
{
    \label{Fig2_decomposition:path}
%\scalebox{0.33}
{
  \includegraphics[width=1.44in]{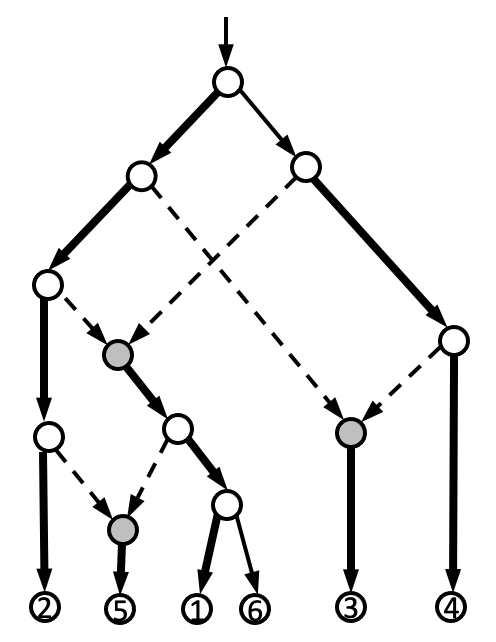}
}
}
\caption{Illustration of the decomposition of a phylogenetic network with $k$ reticulate nodes and $n$ leaves. In this example, $k=3$ and $n=6$. Part \ref{Fig2_decomposition:trees}: decompose into $k+1$ disjoint tree components. The tree component rooted at the network root is highlighted in green; other tree components  rooted at a recirculate node are in blue. Part \ref{Fig2_decomposition:path}: decompose into $n$ paths (each path appears in a tree component; ordered by the leaf labels). Reticulate edges: dashed lines. Edges in paths: thick lines. Tree edges not on paths: thin lines.
\label{Fig2_decomposition}}
\end{figure}

\textbf{Path decomposition}
The network decomposition for a tree-child network leads to a set of trees, where the trees are connected by reticulate edges. We can further decompose each tree component into \textit{paths} as follows. Suppose a tree component contains $p$ leaves and  these leaves are ordered in some way. We create a path for each leaf sequentially. Let $a$ be the current leaf. We create a path of edges from a node as close to the root of the the tree as possible, and down to $a$. We then remove \emph{all} edges starting at a path and ending at a different path. This procedure (called path decomposition) is illustrated in Figure \ref{Fig2_decomposition:path}, where the path creation follows the numerical order of the leaves. Note that path decomposition is a valid decomposition of a network $N$: each node in $N$ belongs to a unique path after decomposition. This is because only edges (not nodes) are removed during the above procedure. In addition,  path decomposition depends on the ordering on nodes: suppose we trace two paths backward to the root; when two paths meet, the path ending at an earlier node continues and the path to a later node ends. This implies that the ordering of leaves affect the outcome of the path decomposition. Moreover, a path starts at either a reticulate node or a tree node in path decomposition. At least one incoming edge is needed to connect the path to the rest of the network (unless the path starts from the root of the network).

\textbf{Displaying trees and path decomposition}
 When a tree $T$ is displayed in $N$, there are edges in $N$ that form a topologically equivalent tree (possibly with degree-two nodes) as $T$. Now,  when $N$ is decomposed into paths, to display $T$, we need to \textit{connect} the paths by using (either tree or reticulate) edges \textit{not} belonging to the paths.  Intuitively, tree edges connect the paths in a fixed way while reticulate edges lead to different topology of paths. That is, to display different trees, we need to connect the paths using different reticulate edges. This simple property is the foundation of the lower bounds we are to describe in Section~\ref{sectLBILP}. 

Recall that to display a tree, we need to make choices for each reticulate edge whether to keep or discard. This choice is called the \textit{display choice} for this reticulate edge.

%%%%%%%%%%%%%%%%%%%%%%%%%%%%%%%%%%%%%%%%%%%%%%%%%%%%%%%%

\section{Lower bounds on the tree-child reticulation number}
\label{sectLBILP}

In this section, we present several practically computable lower bounds for the tree-child reticulation number for displaying a set of trees. These bounds are derived based on the  decomposition of tree-child networks.

\subsection{$\mathrm{TCLB}_1$: a simple lower bound}

Recall that any tree-child network with $n$ taxa $\{1, 2, \ldots, n\}$ can be decomposed by path decomposition into $n$ simple paths (possibly in different ways), where each path starts with some network node and ends at a taxon. Now we consider a specific network $N$ and
a specific decomposition of $N$ into $n$ paths $P_i$ ($1 \leq i\leq n$). Each $P_i$ starts from nodes in the network and ends at taxon $i$. We say $P_i$ and $P_j$ are connected if some node within $P_i$ is connected by an edge to the start node of $P_j$ or vice versa. We define a binary variable $C_{i,j}$ to indicate whether or not $P_i$ and $P_j$ are connected for each pair of $i$ and $j$ such that $1 \le i < j \le n$. Note that $C_{i,j}$ is for a specific network $N$ and a specific path decomposition of $N$.  For an example, in the path decomposition in Figure \ref{Fig2_decomposition:path}, We have: 
$$C_{1,2}=C_{1,4}=C_{1,5}=C_{1,6}= C_{2,3}=C_{2,4}=C_{2,5}= C_{3,4}=1$$
and $C_{i, j}=0$ for other index pairs. 
Note that each of these $C_{i,j}=1$ corresponds to a specific (tree or reticulate) edge \textit{not} inside paths.
%
%
%We have the following simple observation.
 
 \begin{lemma}
 \label{lemmaLB0} 
 Let ${\cal T}$ be a set of trees.

{\rm (1)}  If $N$ is a network that displays $\cal T$, $C(N, D)=\sum_{1 \le i < j \le n} C^{(D)}_{i,j} -n +1 \le \mathrm{TCR}(N)$, 
where $C^{(D)}_{i,j}$ is obtained from the path decomposition constructed according to an arbitrary ordering $D$ on the taxa of the trees. Here, $\mathrm{TCR}(N)$ is the reticulation number of the network $N$. 

{\rm (2)} For any ordering $D$ on the taxa of $\cal T$, we have the following lower bound on $\mathrm{TCR}_{\min}
$: 
$$\mathrm{TCLB} \stackrel{\rm \Delta}{=} min_{N}  C(N, D) - n + 1\leq \mathrm{TCR}_{\min}.$$ 
 \end{lemma}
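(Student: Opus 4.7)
The plan is to prove part (1) by a careful edge-counting argument on the path decomposition, and then deduce part (2) by applying part (1) to a minimum network and taking the minimum.

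For part (1), fix a tree-child network $N$ displaying $\mathcal{T}$, fix the ordering $D$, and let $P_1,\ldots,P_n$ be the resulting paths. Let $k$ be the number of reticulate nodes of $N$ and $R$ the total number of reticulate edges, so that by definition $\mathrm{TCR}(N)=R-k$. I would call an edge of $N$ a \emph{connecting edge} if it is not contained in any $P_i$. The crux is the identity
$$
\#\{\text{connecting edges}\} \;=\; (n-k-1) + R.
$$
To see this, use the network decomposition into $k+1$ tree components (one per reticulate node together with one rooted at the root of $N$), and let $p_0,\ldots,p_k$ be the numbers of leaves in these components, so that $\sum_i p_i = n$. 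Within a single tree component with $p_i$ leaves, the path decomposition produces exactly $p_i$ paths. Since each path of $\ell$ nodes contributes $\ell-1$ edges, and every node of the tree component lies in exactly one path, the number of path edges inside the component equals (number of nodes) $-\,p_i$; because a tree on a given node set has one fewer edge than nodes, the number of tree edges inside the component that are \emph{not} on any path is exactly $p_i-1$. Summing over tree components gives $\sum_i(p_i-1)=n-(k+1)$ connecting tree edges. On the other hand, every reticulate edge enters a reticulate node, which is by definition the top node of some path in another tree component, so every reticulate edge is connecting and contributes $R$ to the count. These two classes exhaust the connecting edges, which proves the identity.

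Each connecting edge has its head on one path $P_j$ and its tail on a different path $P_i$, so it witnesses $C_{i,j}^{(D)}=1$ for a unique unordered pair $\{i,j\}$. Because $C_{i,j}^{(D)}\in\{0,1\}$, several connecting edges between the same pair of paths only contribute once to the sum, giving
$$
\sum_{1\le i<j\le n} C_{i,j}^{(D)} \;\le\; \#\{\text{connecting edges}\} \;=\; (n-k-1)+R.
$$
Rearranging yields $C(N,D)=\sum_{i<j} C_{i,j}^{(D)} - n + 1 \le R-k = \mathrm{TCR}(N)$, which is part (1).

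For part (2), let $N^{\ast}$ be any tree-child network displaying $\mathcal{T}$ that achieves $\mathrm{TCR}(N^{\ast})=\mathrm{TCR}_{\min}$. Applying part (1) to $N^{\ast}$ with ordering $D$ gives $C(N^{\ast},D)\le \mathrm{TCR}_{\min}$, and taking the minimum over all networks $N$ displaying $\mathcal{T}$ (which can only decrease the left-hand side) yields $\min_N C(N,D)\le \mathrm{TCR}_{\min}$, as required.

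The main obstacle I expect is the bookkeeping in the edge count: one must verify that the path decomposition inside each tree component really leaves exactly $p_i-1$ tree edges off the paths (which relies on every node being covered by a unique path, a point stated in the preceding text), and that every reticulate edge crosses between two \emph{distinct} paths (which uses the fact that a reticulate node starts a new path, guaranteed by the construction of path decomposition together with the tree-child property).
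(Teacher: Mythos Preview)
Your proof is correct and rests on the same counting identity as the paper's, namely that the number of edges not lying on any decomposition path equals $(n-k-1)+R$; your inequality $\sum_{i<j}C^{(D)}_{i,j}\le \#\{\text{connecting edges}\}$ and your derivation of part (2) match the paper exactly. The only organizational difference is that the paper bounds the sum path-by-path, observing that all connections into $P_j$ land at its top node $u_j$ and hence $\sum_i C^{(D)}_{i,j}\le d_{\mathrm{in}}(u_j)$, then sums the in-degrees directly to get $R-k$; this avoids your detour through tree components and the $p_i-1$ bookkeeping, though both routes arrive at the same number.
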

\begin{proof}
(1) Assume $\mathrm{TCR}(N)=k$. Let $D$ be an arbitrary ordering on the taxa $X$ of $\cal T$. 
The path decomposition constructed according to $D$ contains $n$ paths $P_1, P_2, \cdots, P_n$. Here, $C^{(D)}_{i,j} = 1$ if $P_i$ and $P_j$ are connected by an edge that goes from $P_i$ to $P_j$. That is, $P_i$ is obtained prior to $P_j$.

If $P_j$ starts from a tree node $u_j$ rather than the root of $N$, 
there is a unique path $P_i$ such that $C^{(D)}_{i,j}=1$ and $C^{(D)}_{i',j}=0$ for any $i'\neq i$. This implies that $\sum_{1\leq i\leq n} C^{(D)}_{i,j}=1=d_{in}(u_j)$.  Here, $d_{in}(u)$ is the in-degree of the node $u$.  If $u_j$ is the root of $N$, then $\sum_{1\leq i\leq n} C^{(D)}_{i,j} = d_{in}(u_j) = 0$: no path exists that is prior to $P_j$ with regard to $D$.

If $P_j$ starts from a reticulate node $u_j$, there are $d_{in}(u_j)$ reticulate edges entering $u_j$. Therefore, $P_j$ is connected with at most $d_{in}(u_j)$ paths. Therefore,
$\sum_{1\leq i \leq n}C^{(D)}_{i,j}\leq d_{in}(u_j).$

Summing these terms together, we obtain:
$$C(N, D)-n+1=\sum_{1\leq j\leq n} \sum_{1\leq i\leq n} C^{(D)}_{ij} -n+1 \leq \sum_{1\leq j\leq n} d_{in}(u_j) -n+1=\mathrm{TCR}_{\min}.$$

 (2)~  Let $O$ be a tree-child network with the smallest reticulate number $\mathrm{TCR}_{\min}$ that displays $\cal T$. For any ordering $D$ on the taxa, we have that 
   $C(O, D)\leq \mathrm{TCR}_{\min}$
and thus $\mathrm{TCLB}=min_{N}  C(N, D) \leq C(O, D)\leq \mathrm{TCR}_{\min}$. 
\end{proof}

%\todo{YW: I revised the above proof by stating C(i,j) given D is ordered. }

$\mathrm{TCLB}$ is a lower bound because it may underestimate $ \mathrm{TCR}_{\min}$ because $C_{i,j}$ are binary and there can be more than one edges connecting two paths in a path decomposition of the optimal network.
While Lemma~\ref{lemmaLB0} leads to a lower bound, $\mathrm{TCLB}$ is hard to compute because it needs to consider all possible networks $N$ that displays the given trees $\mathcal{T}$. We now show that we can practically compute a weaker bound $\mathrm{TCLB}_1$, which bounds from below $\mathrm{TCLB}$ and thus $\mathrm{TCR}_{\min}$.

We consider a binary tree $T \in \mathcal{T}$. 
(Our bounds can be generalized to non-binary trees.)
The following lemma illustrates one structural property of tree-child network when displaying a \emph{subtree} $T_1$ of $T$. 
Assume $T_1$ is rooted at node $v$.
Let $S(v)$ be the set of taxa under the node $v$. Since $T_1$ is also displayed in $N$, there exists some non-path edges (i.e., edges not on the paths in the path decomposition) which connects the paths, one path for each leaf in $S(v)$, that displays $T_1$. Let $v(N)$ be the node in $N$ that is the root of the displayed subtree in $N$. We say $T_1$ is displayed at node $v(N)$. 

\begin{lemma}
\label{lemmaSubtreeTC}
Let  $N$ be a tree-child network displaying $\mathcal{T}$ and let $T_1$ be a subtree rooted at $v$ of $T$ and be displayed at a node $v(N)$. Then, for any path decomposition, $v(N)$ is on some path. That is, we can always trace from a taxon from $S(v)$ upwards in $N$ and reach $v(N)$ by following only path edges for the path decomposition.
\end{lemma}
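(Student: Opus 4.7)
The plan is to locate the unique path $P$ in the decomposition that contains $v(N)$, identify the leaf $a$ at which $P$ terminates, and then show that $a$ must lie in $S(v)$. Because the path decomposition partitions the nodes of $N$ into paths, such a path $P$ always exists. Moreover, since the decomposition is carried out inside each tree component of $N$, every edge of $P$ is a tree edge, and tracing $P$ downward from $v(N)$ to $a$ gives a directed path of tree edges in $N$.

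The key step is then to argue that $a \in S(v)$. For this I would exploit the fact that tree edges are never discarded when forming a displayed tree, since a display choice only selects one incoming edge at each reticulate node. Consequently, the tree-path from $v(N)$ to $a$ is preserved in the sub-DAG of $N$ whose suppression of degree-two nodes yields $T$. Since $T_1$ is displayed at $v(N)$, the node $v(N)$ is the image of $v$ under this display, so $a$ appears as a descendant of $v(N)$ in the displayed sub-DAG. By the definition of displaying, descendant relations in this sub-DAG correspond to descendant relations in $T$, so $a$ is a descendant of $v$ in $T$, i.e., $a \in S(v)$. Reversing $P$ from $a$ back to $v(N)$ along path edges then proves the lemma.

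The main obstacle I anticipate is the last translation: turning ``$a$ is a descendant of $v(N)$ in the displayed sub-DAG of $N$'' into ``$a$ is a descendant of $v$ in $T$''. Care is needed because the sub-DAG may contain degree-two nodes that are suppressed before $T$ is obtained, and because some nodes on $P$ (in particular those strictly between $v(N)$ and $a$) may not have any counterpart as an internal node of $T$. However, suppression preserves ancestor-descendant relations between retained nodes, so once one checks that both $v(N)$ and $a$ survive as the image of $v$ and the leaf $a$ respectively, the translation goes through. A small degenerate case to keep in mind is $v(N) = a$, where $P$ is a single node and the trace is vacuous.
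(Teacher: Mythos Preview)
Your argument is correct and in fact goes further than the paper's own proof. The paper only establishes the first sentence of the lemma: it invokes the tree-child property to note that $v(N)$ reaches some leaf via tree edges, hence lies in a tree component containing a leaf, and therefore sits on some path of any path decomposition. It never checks that the terminal leaf of that path lies in $S(v)$, even though this is asserted in the ``That is'' clause and is relied upon when the lemma is applied in Lemma~\ref{lemmaSubtreeTC4}. You supply this missing step: because the downward segment of $P$ from $v(N)$ to $a$ consists only of tree edges, it is retained under any display choice, so $a$ is below $v(N)$ in the displayed sub-DAG and hence below $v$ in $T$ after suppression. Your route to the first sentence is also slightly more direct---you use the partition property of the path decomposition rather than the tree-child property---but both are valid; the substantive addition is the $a\in S(v)$ argument, and your handling of the suppression subtlety is fine since both $v(N)$ (as the image of $v$) and the leaf $a$ survive suppression with their ancestor--descendant relation intact.
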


\begin{proof}
By the tree-child property, there is a leaf $a$ that can be reached from $v(N)$ following only tree edges. Thus, $v(N)$ and $a$ must be inside the same tree component (recall path component is obtained by further decomposition of some network decomposition into trees). Therefore, no matter how path composition is performed, there is always a leaf $a'$ where $v(N)$ is on the path ending at $a'$. 
%\hfill $\Box$
\end{proof}

\begin{lemma}
\label{lemmaSubtreeTC4}
\label{lemmaLB1}
Let $v$ be an internal node of $T \in \mathcal{T}$ with $v_1$ and $v_2$ as its children.
In a tree-child network $N$ displaying $\mathcal{T}$, for any path decomposition of $N$, 
\[    \sum_{i \in S(v_1)  }   \sum_{j \in S(v_2) }  C_{i,j}  \ge 1  \]
\end{lemma}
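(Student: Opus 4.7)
The plan is to reduce to Lemma~\ref{lemmaSubtreeTC} by tracing the directed path in $N$ that realizes the displayed image of one of the two edges out of $v$ in $T$. First, applying Lemma~\ref{lemmaSubtreeTC} to the subtree at $v$, the node $v(N)$ lies on some path $P_\alpha$ of the path decomposition with $\alpha \in S(v) = S(v_1) \cup S(v_2)$. Since the conclusion is symmetric in $v_1$ and $v_2$, I may assume $\alpha \in S(v_1)$.

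Next, I write the directed path in $N$ that realizes the displayed image of the edge $v \to v_2$ as $v(N) = u_0 \to u_1 \to \cdots \to u_k = v_2(N)$, with $k \ge 1$ since $v(N) \ne v_2(N)$. For each $i \in \{1,\ldots,k\}$, the set of leaves of $T$ reachable from $u_i$ along the displayed copy of $T$ is exactly $S(v_2)$: either because $u_i$ is a non-branching interior node on the displayed image of the edge $v \to v_2$ (for $i<k$), or because $u_i = v_2(N)$ (for $i=k$), whose displayed subtree has leaf set $S(v_2)$.

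The main step is to show that, for each $i \in \{1,\ldots,k\}$, the path $Q_i$ of the path decomposition containing $u_i$ terminates at a leaf in $S(v_2)$. Let $\ell_i$ be this terminal leaf. Path edges lie inside a tree component and are therefore tree edges of $N$, so walking from $u_i$ down along the remaining edges of $Q_i$ traces a tree-only path in $N$ from $u_i$ to $\ell_i$. Every node on this tree-only path other than $u_i$ has in-degree one in $N$, so its parent is forced in any display; iterating, $\ell_i$ must be a descendant of $u_i$ in the display of $T$. Combined with the previous paragraph, this forces $\ell_i \in S(v_2)$.

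Finally, since $Q_0 = P_\alpha$ terminates in $S(v_1)$ while $Q_1$ terminates in $S(v_2)$, we have $Q_0 \ne Q_1$; hence the edge $u_0 \to u_1$ is a non-path edge connecting a path in $\{P_i : i \in S(v_1)\}$ to a path in $\{P_j : j \in S(v_2)\}$. This contributes $C_{\alpha,\ell_1} = 1$ to the sum and yields the claimed inequality. The main obstacle is the tree-only path step of the third paragraph, which re-uses the mechanism behind Lemma~\ref{lemmaSubtreeTC} but has to be carried out for an arbitrary non-branching interior node of the displayed edge $v \to v_2$ rather than for the image of a vertex of $T$; the care needed is in verifying that the in-degree-one observation really does confine $\ell_i$ to the $v_2$-side of the display and thus to $S(v_2)$.
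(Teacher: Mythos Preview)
Your proof is correct and follows essentially the same route as the paper's. Both arguments locate a non-path edge leaving $v(N)$ toward the $v_2$-side and then certify that the decomposition path through its head terminates in $S(v_2)$. The only organisational difference is that the paper first chooses $v_2(N)$ closest to the root so that the displayed image of the edge $v\to v_2$ collapses to a single edge $v(N)\to v_2(N)$, and then invokes Lemma~\ref{lemmaSubtreeTC} to place $v_2(N)$ on a path ending in $S(v_2)$; you instead leave $v_2(N)$ arbitrary, take just the first edge $u_0\to u_1$, and supply directly the in-degree-one argument that forces $\ell_1\in S(v_2)$. Your version thereby sidesteps the ``single edge'' step and makes explicit the mechanism that in the paper is packaged inside the statement of Lemma~\ref{lemmaSubtreeTC}. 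One small point you leave implicit but which does hold: any non-path edge of $N$ necessarily enters the \emph{start} node of a decomposition path (either a reticulate node, or a tree node whose unique incoming edge was removed), so the edge $u_0\to u_1$ really witnesses $C_{\alpha,\ell_1}=1$ under the paper's definition of ``connected.''
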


\begin{proof}
First $T \in \mathcal{T}$ is displayed in $N$. Then there exist edges of $N$ that connect the paths in a path decomposition to form $T$ (otherwise $T$ cannot be displayed in $N$). So suppose we trace these edges to locate the two subtrees rooted at $v_1$ and $v_2$. By Lemma \ref{lemmaSubtreeTC}, there are nodes $v_1(N)$ and $v_2(N)$  in $N$ where the two subtrees are displayed at, and are on some paths (denoted as $P$ and $P_j$ respectively). Here, $j$ is a taxon and $j \in S(v_2)$. When there are multiple such nodes for displaying an identical subtree, we choose the one that is closest to the root of $N$.

Now there is a node $v(N)$ in the network where the subtree of $T$ rooted at $v$ is displayed. Again by Lemma \ref{lemmaSubtreeTC}, $v(N)$ is on a path $P_i$ for some leaf $i$. This implies either $v_1(N)$ or $v_2(N)$ is on $P_i$ too. Without loss of generality, suppose $v_1(N)$ is on $P_i$. Then there must exist an edge between the path to $i$ and $P_j$ and $i \in S(v_1)$. This is because (i) there exists a path in $N$ from $v(N)$ to $v_2(N)$ that is taken to display $T$ in $N$; (ii) this path can have only a single edge; if not, then there exists at least a node $v_3$ \textit{not} on $P$ or $P_j$ (recall $v_2(N)$ is the one closest to the root among all choices for $v_2(N)$); (iii) let $v_3$ be on a decomposed path $P_k$ (which connects to a leaf $k$; but this violates the assumption that $v_1(N)$ and $v_2(N)$ display two subtrees of $v$.
This implies $C_{i,j} = 1$. We don't know which $i$ and $j$ for the network $N$. Nonetheless, there exists some $i \in S(v_1)$ and $j \in S(v_2)$ where $C_{i,j}=1$.  
%\hfill $\Box$
\end{proof}

Lemma~\ref{lemmaSubtreeTC4} leads to the following lower bound $\mathrm{TCLB}_1$.

\begin{proposition}
\label{propLB1}
Let $C_{i,j}$ be binary variables for $1 \le i < j \le n$. 
Let $\mathrm{TCLB}_1 = min (  \sum_{1 \le i < j \le n} C_{i,j} ) - n + 1 $ where $C_{i,j}$ satisfies the following constraint: for any internal node $v$ of a tree $T \in \mathcal{T}$ with two children $v_1$ and $v_2$, the condition stated in Lemma \ref{lemmaLB1} 
is satisfied.
Then $\mathrm{TCLB}_1$ is a lower bound on the tree-child reticulation number.
\end{proposition}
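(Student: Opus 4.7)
The plan is to prove the inequality $\mathrm{TCLB}_1 \le \mathrm{TCR}_{\min}$ by inserting $\mathrm{TCLB}_1$ into the chain already set up by Lemma~\ref{lemmaLB0}: namely, I would show $\mathrm{TCLB}_1 \le \mathrm{TCLB} \le \mathrm{TCR}_{\min}$. The right-hand inequality is Lemma~\ref{lemmaLB0}(2), so the entire content of the proof is the left-hand inequality, which I would cast as a ``relaxation'' statement: every assignment $\{C^{(D)}_{i,j}\}$ arising from an honest path decomposition of a tree-child network that displays $\mathcal{T}$ is a feasible point for the integer program defining $\mathrm{TCLB}_1$.

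To carry this out, I would fix an arbitrary tree-child network $N$ that displays $\mathcal{T}$, fix an ordering $D$ of the taxa, and form the binary indicators $C^{(D)}_{i,j}$ from the associated path decomposition, as in Lemma~\ref{lemmaLB0}. Then for every tree $T\in\mathcal{T}$ and every internal node $v$ of $T$ with children $v_1,v_2$, Lemma~\ref{lemmaLB1} gives immediately
\[
  \sum_{i\in S(v_1)}\sum_{j\in S(v_2)} C^{(D)}_{i,j} \;\ge\; 1,
\]
which is precisely the constraint imposed on the abstract variables in the statement of Proposition~\ref{propLB1}. Consequently $\{C^{(D)}_{i,j}\}$ is feasible for the minimization defining $\mathrm{TCLB}_1$, and so
\[
  \min\Bigl(\sum_{1\le i<j\le n} C_{i,j}\Bigr) \;\le\; \sum_{1\le i<j\le n} C^{(D)}_{i,j} \;=\; C(N,D).
\]
Subtracting $n-1$ from both sides and then taking the minimum over all networks $N$ displaying $\mathcal{T}$ yields $\mathrm{TCLB}_1 \le \mathrm{TCLB}$, and chaining with Lemma~\ref{lemmaLB0}(2) finishes the proof.

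I do not expect a serious obstacle, since almost all the work has been done in Lemmas~\ref{lemmaLB0} and~\ref{lemmaLB1}; the argument amounts to observing that enlarging the feasible region can only decrease the optimum. The one bookkeeping point I would be careful about is the indexing convention: the proposition ranges $C_{i,j}$ over $1\le i<j\le n$, whereas Lemma~\ref{lemmaLB1} is stated on the ordered product $S(v_1)\times S(v_2)$. Since $S(v_1)$ and $S(v_2)$ are disjoint, I would simply interpret $C_{i,j}$ symmetrically (i.e.\ as $C_{\min(i,j),\max(i,j)}$) so that the constraint is well-posed on the variables of the minimization, after which the two sums coincide and the relaxation step goes through verbatim.
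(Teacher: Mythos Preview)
Your proposal is correct and matches the paper's approach: the paper gives no separate proof of Proposition~\ref{propLB1} but simply presents it as an immediate consequence of Lemma~\ref{lemmaLB1} (and implicitly Lemma~\ref{lemmaLB0}), which is exactly the relaxation argument you spell out. Your handling of the $i<j$ versus ordered-pair indexing is also the natural reading intended by the paper.
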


As an example, consider the tree on the right in Figure \ref{Fig1_examples}. We have the following constraints: $C_{2,4}  \ge 1, C_{1,3} \ge 1, C_{1,2}+C_{1,4} + C_{2,3} + C_{3,4} \ge 1$. When there are multiple trees, we create such constraints for each tree. $\mathrm{TCLB}_1$ takes the minimum over all choices of $C_{i,j}$ that satisfy all the constraints.

While we don't know how to efficiently compute $\mathrm{TCLB}_1$, 
it is straightforward to apply integer linear programming formulation (ILP) to compute $\mathrm{TCLB}_1$. Our experience in using ILP modelling shows that $\mathrm{TCLB}_1$ can usually be computed efficiently (in practice) even for large data: for 100 binary trees with 100 taxa, it usually takes less than one second even using a very basic ILP solver.

\subsection{$\mathrm{TCLB}_2$: a stronger lower bound}

We now present techniques to strengthen it to obtain a stronger lower bound called $\mathrm{TCLB}_2$. 
We start with a stronger version of Lemma~\ref{lemmaSubtreeTC4}.  
We need a special kind of path decomposition, called ``ordered path decomposition", of a network $N$. An ordered path decomposition is a path decomposition where its paths can be arranged in a total order, and \textit{all} reticulate and non-path tree edges are oriented in one direction relative to this total order. Such ordered path decomposition always exists. To see this,
 recall that $N$ is a digraph. Thus, all components of $N$ obtained by network decomposition can be arranged in a total order. Then we can obtain a tree decomposition by decomposing each component into paths. This leads to a tree decomposition where paths are linearly ordered from left to right and all reticulate edges and all non-path tree edges are oriented from left to right.  

We now consider an ordered path decomposition. We let $f(v)$ be the taxon in $S(v)$ that is ordered the first among all the taxa in $S(v$). That is, $f(v)$ is the taxon under node $v$ that is ordered the first among all the taxa (leaves) under $v$.

\begin{lemma}
\label{lemmaLB2}
Let $v_1$ and $v_2$ be the two children of node $v$ of some tree. Then,

\[   C_{ f(v_1),  f(v_2)}  = 1  \]
 
\end{lemma}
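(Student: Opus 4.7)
The plan is to rerun the proof of Lemma~\ref{lemmaSubtreeTC4}, but exploit the orientation in an ordered path decomposition to force the connecting pair $(i,j)$ there to be exactly $(f(v_1), f(v_2))$.

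First I would reapply the proof of Lemma~\ref{lemmaSubtreeTC4} verbatim (now using the chosen ordered path decomposition) to obtain nodes $v(N), v_1(N), v_2(N)$ in $N$ that display the subtrees rooted at $v, v_1, v_2$, picking $v_1(N)$ and $v_2(N)$ as close to the root of $N$ as possible. That argument gives, up to swapping the roles of $v_1$ and $v_2$: $v(N)$ sits on some path $P_i$; $v_1(N)$ sits on the same $P_i$; and a single non-path edge from $P_i$ lands on the start node $v_2(N)$ of another path $P_j$. In particular $C_{i,j}=1$ with $i \in S(v_1)$ and $j \in S(v_2)$.

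The new content is to pin down $i=f(v_1)$ and $j=f(v_2)$, and this is where the orientation is used. In an ordered path decomposition, every reticulate edge and every non-path tree edge points from a lower-ordered path to a higher-ordered one. Hence from $v_k(N)$ ($k \in \{1,2\}$), the only leaves one can reach in $N$ lie on paths of order at least that of the carrying path $P$ of $v_k(N)$: inside $P$ one merely slides downward, and any excursion off $P$ must move rightward in the total order. Since $v_k(N)$ is the root of the displayed copy of the subtree at $v_k$, all of $S(v_k)$ is reachable from $v_k(N)$, so every leaf in $S(v_k)$ has order at least that of the endpoint of $P$. Conversely, the tree edges along $P$ going from $v_k(N)$ down to that endpoint are preserved under any display choice (tree edges are never removed when displaying $T$), so the endpoint leaf itself lies in $S(v_k)$. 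Combining the two, that endpoint is the minimum of $S(v_k)$ in the ordering, i.e.\ it is $f(v_k)$, and therefore $P = P_{f(v_k)}$.

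Applying this to $v_1(N)$ and $v_2(N)$ gives $i=f(v_1)$ and $j=f(v_2)$, and the direct non-path edge from $P_i$ to the start of $P_j$ then witnesses $C_{f(v_1),f(v_2)}=1$. The main obstacle I expect is the clean bookkeeping of two small facts I used above: first, that the endpoint leaf of the carrying path of $v_k(N)$ really does lie in $S(v_k)$, which needs preservation of on-path tree edges across every display choice; and second, that the non-path edge exiting $P_i$ lands exactly at the start node of $P_{f(v_2)}$ and not somewhere in its interior — this is forced by the ``closest to the root'' minimality in the choice of $v_2(N)$, exactly as in the proof of Lemma~\ref{lemmaSubtreeTC4}. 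Once these two points are written down carefully, the conclusion $C_{f(v_1),f(v_2)}=1$ is immediate.
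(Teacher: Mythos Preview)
Your proposal is correct and follows essentially the same approach as the paper: rerun the argument of Lemma~\ref{lemmaSubtreeTC4} under an ordered path decomposition, then use the left-to-right orientation of all non-path edges to force the carrying path of $v_k(N)$ to be $P_{f(v_k)}$. The paper phrases the orientation step bottom-up (tracing from leaves upward one always lands on the leftmost path), while you phrase it top-down (from $v_k(N)$ one can only reach leaves of order at least that of its carrying path, and the path's own endpoint lies in $S(v_k)$), but this is the same idea; your version is in fact more carefully spelled out than the paper's sketch.
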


\begin{proof}
Recall the proof of Lemma \ref{lemmaSubtreeTC4}. When we trace the subtree rooted at $v_1$, the root of this subtree must be located within the simple path for $f(v_1)$. This is because the network is acyclic and the simple paths are ordered as in the specific path decomposition. Recall that all reticulate and non-path edges are oriented from left to right. So when we trace edges in a bottom up order (starting from leaves), we must reach the node (i.e. $f(v_1)$) that is ordered the first (i.e., the leftmost).
The situation for $v_2$ is similar. Thus, by the same reason as in Lemma \ref{lemmaLB1}, $P_{f(v_1)}$ and  $P_{f(v_2)}$ must be connected. 
%\hfill $\Box$
\end{proof}

Lemma \ref{lemmaLB2} leads to a stronger lower bound $\mathrm{TCLB}_2$. This is because if $C_{i,j}$ values satisfy the conditions in Lemma \ref{lemmaLB2}, they also satisfy the conditions in Lemma \ref{lemmaLB1}. 

Let $\mathcal{O}^*$ be the total order of the $n$ taxa in an ordered path decomposition.  We let $B( \mathcal{O^*}) = \sum_{i, j} C_{i,j}$, where $C_{i,j}=1$ if Lemma \ref{lemmaLB2} specifies which two taxa $i$ and $j$ must have $C_{i,j}=1$, when we consider \emph{all} internal nodes of each tree in $\mathcal{T}$. If $i$ and $j$ are not forced by Lemma \ref{lemmaLB2}, $C_{i,j}=0$. That is, $B( \mathcal{O^*})$ is fully decided if $\mathcal{O^*}$ is given. By Lemma \ref{lemmaLB2}, $B( \mathcal{O^*})-n+1$ is a lower bound on $\mathrm{TCR_{\min}}$.
One technical difficulty is that we don't know $\mathcal{O^*}$ for $N$. 
Nonetheless, we can derive a lower bound on $\mathrm{TCR}_{\min}$ by taking the \textit{minimum} over \textit{all} possible $\mathcal{O}$. Thus, we have the following observation. 

\begin{proposition}
\label{propLB2}

 $\mathrm{TCLB}_2 \stackrel{\rm \Delta}{=}  min_{\mathcal{O} }  (B(  \mathcal{O} ) - n + 1) $ is a lower bound on the tree-child reticulation number.
\end{proposition}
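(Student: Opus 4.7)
The plan is to fix an arbitrary tree-child network $N$ that displays $\mathcal{T}$ and to exhibit a specific total order $\mathcal{O}^*$ on the $n$ taxa such that $B(\mathcal{O}^*) - n + 1 \leq \mathrm{TCR}(N)$. Because $\mathrm{TCLB}_2$ is the minimum of $B(\mathcal{O}) - n + 1$ over all orderings $\mathcal{O}$, this inequality immediately gives $\mathrm{TCLB}_2 \leq \mathrm{TCR}(N)$, and specializing to an optimal tree-child network displaying $\mathcal{T}$ yields $\mathrm{TCLB}_2 \leq \mathrm{TCR}_{\min}$.

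First I would invoke the construction recalled just before Lemma \ref{lemmaLB2}: since $N$ is a rooted DAG, its tree components can be linearly ordered, and each can be split into paths to produce an ordered path decomposition in which every reticulate and non-path tree edge is oriented from a ``left'' path to a ``right'' path. Let $\mathcal{O}^*$ be the total order on the $n$ taxa induced by the left-to-right order of the paths. By Lemma \ref{lemmaLB2}, for every internal node $v$ of every tree $T \in \mathcal{T}$ with children $v_1, v_2$, the indicator $C_{f(v_1),f(v_2)}^{(\mathcal{O}^*)}$ in this actual path decomposition of $N$ equals $1$.

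Next I would compare $B(\mathcal{O}^*)$ to the true connection count in $N$. By definition, $B(\mathcal{O}^*)$ is the sum of binary indicators $C_{i,j}$ set to $1$ exactly for the pairs $(i,j) = (f(v_1),f(v_2))$ forced by Lemma \ref{lemmaLB2}, aggregated over all internal nodes across all trees in $\mathcal{T}$ (with repeated pairs collapsed by binarity). Since every such forced pair satisfies $C_{i,j}^{(\mathcal{O}^*)} = 1$ in the actual path decomposition of $N$, we obtain $B(\mathcal{O}^*) \leq \sum_{1 \leq i < j \leq n} C_{i,j}^{(\mathcal{O}^*)}$. Applying Lemma \ref{lemmaLB0}(1) to this same path decomposition yields $\sum_{i<j} C_{i,j}^{(\mathcal{O}^*)} - n + 1 \leq \mathrm{TCR}(N)$, and chaining the two inequalities produces $B(\mathcal{O}^*) - n + 1 \leq \mathrm{TCR}(N)$, as needed.

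The main obstacle is essentially a bookkeeping one: the ordering used to read off the $B$-forced pairs via Lemma \ref{lemmaLB2} must coincide with the ordering in which Lemma \ref{lemmaLB0}(1) is applied. This is precisely why \emph{ordered} path decompositions are used here: the linear arrangement of paths supplies a single, canonical $\mathcal{O}^*$, and every non-path edge points consistently from an earlier to a later path with respect to $\mathcal{O}^*$, so the $C_{i,j}^{(\mathcal{O}^*)}$ values are well-defined and agree in both lemmas. A secondary technical point is that distinct internal nodes (possibly across different trees) can produce the same $(f(v_1), f(v_2))$ pair; this causes no harm because $B$ is defined through binary indicators rather than counts with multiplicity, so each forced pair contributes at most $1$ on both sides of the key inequality $B(\mathcal{O}^*) \leq \sum C_{i,j}^{(\mathcal{O}^*)}$.
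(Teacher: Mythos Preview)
Your proposal is correct and follows essentially the same route the paper sketches in the paragraph preceding Proposition~\ref{propLB2}: fix an ordered path decomposition of $N$ to obtain a canonical taxon order $\mathcal{O}^*$, use Lemma~\ref{lemmaLB2} to see that every forced pair contributes to $\sum C^{(\mathcal{O}^*)}_{i,j}$ so that $B(\mathcal{O}^*)\le \sum C^{(\mathcal{O}^*)}_{i,j}$, invoke Lemma~\ref{lemmaLB0}(1), and then minimize over all orders. Your write-up is in fact more explicit than the paper's, which simply asserts the observation; the bookkeeping points you raise (consistency of the ordering across the two lemmas and the binarity of the indicators) are exactly the right ones to check.
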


Naively, to compute $\mathrm{TCLB}_2$, we have to consider all possible total orders of the taxa. Enumerating all possible total orders of $n$ taxa is infeasible even for relatively small $n$ value. To develop a practically computable bound, we again apply ILP. 
We only provide a brief description of the ILP formulation.

We define binary variable $A_{i,j}$ for all $1 \le i, j \le n$ where $A_{i,j}=1$ if taxon $i$ is ordered earlier than taxon $j$. We need to enforce the ordering implied by $A_{i,j}$ is valid. That is, if taxon $i$ is earlier than $j$ and $j$ is earlier than $k$, then $i$ is earlier than $k$. This can be enforced in ILP as: for all $1 \le i, j, k \le n$ where $i,j,k$ are distinct:

\begin{align}
\label{equILP1}
A_{i,k}  + 1 \ge A_{i,j} + A_{j,k}  
\end{align}

We enforce the condition in Lemma \ref{lemmaLB2} by considering each taxon $i \in S(v_1)$ and taxon $j \in S(v_2)$:

\[  \sum_{p \in S(v_1), p \not=i}  A_{i,p} +  \sum_{q \in S(v_2), q \not=j}  A_{j,q}  \le C_{i,j} + |S(v_1)| + |S(v_2)| - 3  \]

This constraint enforces that $C_{i,j}=1$ if $i$ (respectively $j$) is the first taxon among $S(v_1)$ (respectively $S(v_2)$). Under these constraints, we use ILP to compute the $\mathrm{TCLB}_2$ by minimizing $\sum_{1 \le i < j \le n} C_{i,j}$. 

%%% Added Aug 28, 2023
The number of variables in this ILP formulation is $O(n^2)$, while the the number of constraints is $O(n^3)$ ($n$ is the number of taxa). The number of constraints (which is dominated by Equation \ref{equILP1}) can be large when $n$ increases. Note that since ILP formulation computes a lower bound, even if we skip some constraints in Equation \ref{equILP1}, the ILP still computes a lower bound. Empirical results appear to show that discarding some constraints often does not lead to a much weaker lower bound.

%%%%%%%%%%%%%%%%%%%%%%%%%%%%%%%%%%%%%%%%%%%%%%%%%%%%%%%%
\section{Bound in terms of cherries in the trees}

There is no known polynomial time algorithm for computing the lower bounds in Section \ref{sectLBILP}. A natural research question is developing good lower bounds that are polynomial time computable. 
In the following, we describe an analytical lower bound (called cherry bound) on tree-child reticulation number. Compared with the ILP-computed bounds in Section \ref{sectLBILP}, cherry bound is much easier to compute. However, experience shows that cherry bound tends to be weaker than the ILP-computed bounds. 
Cherry bound is expressed in terms of the number of distinct cherries in given trees $\mathcal{T} $. Here, a cherry is a 
two-leaf subtree in some $T \in \mathcal{T} $. We let $C$ be the number of \emph{distinct} cherries in the given trees $\mathcal{T}$.

Consider a tree-child network $N$ with $r$ reticulate nodes that displays $\mathcal{T}$, where $|\mathcal{T}| \ge 2$. Note that a reticulate node in $N$ has two or more incoming edges.  We let $n_R$ be the total number of reticulate edges of  $N$. That is, $n_R$ is equal to the sum of in-degrees of each reticulate node.  The reticulation number $R$ of  $N$ is equal to $n_R -r $. 

Now suppose we collapse common cherries in $\mathcal{T}$. Here, a common cherry is present in \emph{each} of the trees in $\mathcal{T}$. We collapse such common cherry into a single (new) taxon and repeat until there is no common cherry left. Note that this step is identical to common subtree collapsing,  which is a preprocessing step commonly practiced in phylogenetic network construction. 
 Collapsing identical subtrees in given set of trees is a common practice for computing $R_{\min}$ (see, e.g., \citet{Huson_book,Zhang_22}). So in the following, we assume there is no common cherry in $\mathcal{T}$.

Since cherry is a subtree of two leaves in $\mathcal{T}$, each cherry needs to be displayed in $N$ by obtaining a tree $T$ (through making display choices for reticulate edges) where $T$ displays this cherry. One can view the process of obtaining $T$ is traversing certain nodes of $N$. We have the following observation.

\begin{lemma}
\label{lemmaReducedCherry}
To obtain a cherry in $\mathcal{T}$,  we need to traverse 
either the tail or the head of some reticulate edge in $N$.  That is, displaying a cherry must depend on the choices we make about which reticulate edges to keep for displaying a tree. 
\end{lemma}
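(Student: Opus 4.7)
The plan is a proof by contradiction that leverages the preprocessing step of collapsing common cherries. Fix an arbitrary cherry $\{a,b\}$ appearing in some tree $T \in \mathcal{T}$. Since $|\mathcal{T}| \ge 2$ and all common cherries have been collapsed, there must exist another tree $T' \in \mathcal{T}$ that does not contain $\{a,b\}$ as a cherry. By hypothesis, both $T$ and $T'$ are displayed by $N$ under appropriate display choices for the reticulate edges.

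Assume for contradiction that the cherry $\{a,b\}$ can be displayed in $N$ without traversing either the tail or the head of any reticulate edge. Let $p$ be the node of $N$ at which $\{a,b\}$ is realized as a cherry in some display producing $T$. Under the assumption, no node appearing on the paths from $p$ down to $a$ and from $p$ down to $b$ is a reticulate node (which would traverse a head), and no such node has an outgoing reticulate edge (which would traverse a tail). Consequently, every edge in this substructure rooted at $p$ is a tree edge, every node in it has in-degree one, and every outgoing edge of such a node is again a tree edge.

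I would then argue that this substructure is rigid across the display choices of $N$. Every tree edge of $N$ is retained in every display, so the substructure is present in every tree displayed by $N$. Moreover, since none of its nodes has an outgoing reticulate edge leading outside, the set of descendants of $p$ in any display coincides exactly with the leaves of the substructure. Because $p$ is a cherry root for $\{a,b\}$ in the display of $T$, these leaves must be exactly $\{a,b\}$. It follows that $\{a,b\}$ is a cherry at $p$ in every tree displayed by $N$, and in particular in $T'$, contradicting the choice of $T'$.

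The main delicate step is justifying that the leaves of the substructure are exactly $\{a,b\}$ and that this set coincides with the descendants of $p$ in every display. I would handle this by a short downward induction from $p$: a tree node with no outgoing reticulate edge has all of its outgoing edges retained in every display, so the leaves reachable from $p$ using only substructure edges are precisely the leaves beneath $p$ in any displayed tree. Once this observation is in place, the contradiction with the no-common-cherry assumption is immediate, completing the proof.
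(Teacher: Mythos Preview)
Your proposal is correct and follows essentially the same contradiction argument as the paper: assume the cherry's display avoids all reticulate-edge endpoints, deduce that the cherry appears in every displayed tree, and contradict the no-common-cherry hypothesis. Your version is in fact more careful than the paper's three-sentence proof, since you explicitly address why no additional leaves can appear below $p$ in other displays---a point the paper leaves implicit.
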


\begin{proof}

Suppose displaying a cherry in $N$ can be achieved by following a path that doesn't contain either the head or the tail of some reticulate edge. Then for any display choice (keep or discard) we make for reticulate edges, such path leading to the cherry that is always present. So, this cherry is a common cherry in $\mathcal{T}$, which contradicts our assumption of no common cherry. 
%\hfill $\Box$ 
\end{proof}

By Lemma \ref{lemmaReducedCherry}, each cherry in the given phylogenetic trees is related to the display choices in $N$. It is obvious that a cherry displayed in a tree must also be displayed in the network $N$. Therefore we consider the cherries displayed in the network. Suppose we add reticulate edges one by one to the network. Adding a reticulate edge can lead to new cherries to be displayed in the network. The more distinct cherries there are, the more reticulation is needed. We now make this more precise by establishing an upper bound on the number of distinct cherries that can be displayed by adding a single reticulate edge, which is an edge entering a reticulate node. Note that displaying a cherry can involve more than one reticulate edge. Suppose there are $R$ reticulations and so there are at least $2R$ reticulate edges. 

\begin{lemma}
\label{lemmaCherryMap}
 Selecting a reticulate edge $e_r$ to display a tree in a network $N$ can add at most $2$ distinct cherries.
\end{lemma}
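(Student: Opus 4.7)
The plan is to analyze the structure around $e_r = (u, v)$ (with $u$ the tail and $v$ the reticulate head) using the tree-child property to pin down which cherries could possibly be created by selecting this particular edge. By the tree-child property, descending from $v$ through tree edges yields a specific leaf $\ell_v$ that appears in the $v$-subtree of every displayed tree, regardless of choices made at other reticulate nodes. Similarly, because $e_r$ is reticulate, the tree-child property forces $u$ to have at least one tree-child $w \neq v$, and descending from $w$ through tree edges yields a leaf $\ell_w$ that persists in every displayed tree.

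First I would argue that for a cherry $\{a,b\}$ in some displayed tree $T$ (obtained by keeping $e_r$) to be attributable to the selection of $e_r$ — in the sense that it would not appear if we had kept a different incoming edge of $v$ instead — one of the two leaves must lie in $v$'s subtree in $N$, and the cherry's parent in $T$ must be structurally adjacent to the contracted image $\tilde{e}_r$ of $e_r$. The reason is that a cherry requires both of its sibling child-subtrees (in $T$) to reduce to single leaves, and for $e_r$ to be essential to this reduction, one of those two collapsed subtrees has to hang through $\tilde{e}_r$.

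Next I would enumerate the two admissible configurations. Configuration (i): the cherry's parent in $T$ is the image of $u$. Then the $v$-side subtree must reduce to a single leaf, which by the tree-child argument above is forced to be $\ell_v$, and symmetrically the $w$-side subtree must reduce to the single leaf $\ell_w$; this yields the unique cherry $\{\ell_v, \ell_w\}$. Configuration (ii): the cherry's parent is the image of $v$ itself, which is only possible in the non-binary case when $v$ has out-degree at least two and its child-subtrees reduce to two specific leaves; again the tree-child property pins these leaves down, giving at most one additional cherry. I would also rule out a cherry whose parent is strictly above the image of $u$: such a cherry would need $u$'s entire subtree in $T$ to collapse to a single leaf, but the subtree always contains both $\ell_v$ and $\ell_w$, which is a contradiction.

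The main obstacle is being precise about ``can add'' across the exponential space of display choices and in the presence of degree-two contractions, especially in the non-binary setting the paper allows. The cleanest route I see is a pairing argument: given any display choice that selects $e_r$, compare it with the otherwise-identical choice that instead keeps some alternate in-edge of $v$, then show the resulting displayed trees differ in their cherry sets by at most the two structural cherries identified above; taking the union over all display choices preserves the bound of two distinct added cherries, which is what Lemma~\ref{lemmaCherryMap} asserts.
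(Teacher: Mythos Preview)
Your Configuration~(i) is exactly the paper's ``type-1'' cherry, and your tree-child argument pinning it down to $\{\ell_v,\ell_w\}$ matches the paper. The gap is in the second cherry. Your central claim --- that any cherry attributable to selecting $e_r$ must have a leaf in $v$'s subtree and have its parent adjacent to the image of $e_r$ --- is false. Consider the \emph{twin} edge $e'_r=(u',v)$ that is discarded when $e_r$ is kept: if $u'$ has tree-child $x$ (a leaf) and $u'$'s parent $p$ has another tree-child $y$ (a leaf), then discarding $e'_r$ contracts $u'$ and creates the cherry $\{x,y\}$ at $p$. Neither $x$ nor $y$ lies below $v$, and $p$ is nowhere near $e_r$; yet this cherry appears precisely because $e_r$ was selected. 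This is the paper's ``type-2'' cherry, and it is what actually furnishes the second of the two.

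Your Configuration~(ii), by contrast, posits a cherry rooted at $v$ with $v$ having out-degree at least two; in the tree-child networks of this paper reticulate nodes have a single child, so this case is vacuous and cannot supply the second cherry. Consequently your enumeration bounds only one cherry, not two, and the pairing argument you sketch at the end does not rescue it: the two displayed trees you compare differ not only near $u$ but also near $u'$, and the cherry created at $u'$ is neither of your ``two structural cherries identified above.'' To repair the argument you need to replace Configuration~(ii) with the type-2 analysis at the source of the discarded twin, which is the route the paper takes.
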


\begin{proof}
Recall a cherry is a size-two subtree and is so displayed in the network $N$. To display a cherry in $N$, there are a set of tree or reticulate edges of $N$ that connect the two taxa of the cherry when displaying choices are made. We refer these edges as the cherry display of this cherry. We classify the cherries into two cases based on the types of edges in a cherry display.

\begin{quote}
   \begin{enumerate}
    \item[Type 1.] The cherry display contains at least one reticulate edge. That is, keeping a reticulate edge can only generate a type-1 cherry.
    \item[Type-2.] The cherry display contains only tree edges. That is, a type-2 cherry is only related to discarding (but not keeping) some reticulate edges. 
\end{enumerate}
\end{quote}

We now argue that keeping a reticulate edge $e_r$ can only generate at most one type-1 cherry and at most one type-2 cherry.
To see this, we first consider the case of keeping $e_r$. We call a taxon $a$ a \emph{tree}-taxon under an ancestor node $v$ if $a$ can be reached from $v$ by following a simple path with only tree edges, i.e. $a$ is a descendant of $v$ in the tree component containing $v$. Due to the property of tree-child network, at most one type-1 cherry can be obtained by keeping $e_r$: there must be only one tree-taxon $a$ below the destination of $e_r$, and one tree-taxon $b$ below the other child of the source of $e_r$, and keeping $e_r$ can only create a single distinct cherry $(a,b)$. Note that otherwise, no cherry can be formed by keeping $e_r$. If $e_r$ is kept, we have to remove its twin reticulate edges $e'_r$, this may display another cherry in the tree component containing the source node of $e'_r$, which is of type-2.

Therefore, we conclude that at most two distinct cherries can be associated with a reticulate edge. 
%\hfill $\Box$ 
\end{proof}

By Lemma \ref{lemmaReducedCherry}, each distinct cherry in $\mathcal{T}$ is associated with the display choices of some reticulate edge. By Lemma \ref{lemmaCherryMap}, one reticulate edge can lead to at most two distinct cherries. So $2n_R \ge C$. Note that reticulation number $R = n_R -r$ and $n_R \ge 2r$ (there are at least two reticulate edges per reticulate node). So, $R = n_R-r \ge \frac{n_R}{2}$. So,

\[  R \ge  \frac{n_R}{2}  \ge \frac{C}{4}  \]

\begin{proposition}
\label{propCherryBoundRed}
[Cherry bound two on reduced trees]  Let $C$ be the number of distinct cherries in a set of trees $\mathcal{T}$ which have no common cherries. We let $\mathrm{TCLB}_0 = \frac{C}{4}$ (called the cherry bound). Then $\mathrm{TCR}_{\min} \ge \mathrm{TCLB}_0$ (i.e., $\mathrm{TCLB}_0$ is a lower bound). 
\end{proposition}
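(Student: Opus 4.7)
The plan is to combine the two preceding lemmas (Lemma~\ref{lemmaReducedCherry} and Lemma~\ref{lemmaCherryMap}) with a simple double counting over reticulate edges, together with the structural inequality $n_R \ge 2r$ coming from the fact that every reticulate node has in-degree at least $2$.

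First I would fix an arbitrary tree-child network $N$ displaying $\mathcal{T}$, and introduce the quantities already named in the text: $r$ is the number of reticulate nodes, $n_R$ is the total number of reticulate edges (so the sum of in-degrees over reticulate nodes), and $R = n_R - r$ is the reticulation number of $N$. Since each reticulate node contributes at least two incoming edges, I get the inequality $n_R \ge 2r$, which rearranges to $R = n_R - r \ge n_R/2$.

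Next I would bound $n_R$ from below in terms of $C$. Since $\mathcal{T}$ has no common cherries (by hypothesis), Lemma~\ref{lemmaReducedCherry} says that each distinct cherry in $\mathcal{T}$ must, when displayed inside $N$, use either the head or the tail of some reticulate edge. Thus I can define a map sending each distinct cherry of $\mathcal{T}$ to some reticulate edge of $N$ responsible for its display. Lemma~\ref{lemmaCherryMap} then bounds the fiber sizes of this map: any single reticulate edge $e_r$ is the image of at most two distinct cherries (one type-1 cherry arising from keeping $e_r$, and one type-2 cherry arising from discarding its twin). Summing over all reticulate edges yields $C \le 2 n_R$, i.e., $n_R \ge C/2$.

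Chaining the two inequalities gives
\[
R \;\ge\; \frac{n_R}{2} \;\ge\; \frac{C}{4}.
\]
Because this holds for every tree-child network $N$ that displays $\mathcal{T}$, taking the minimum over such $N$ gives $\mathrm{TCR}_{\min} \ge C/4 = \mathrm{TCLB}_0$, which is the desired conclusion. The only real subtlety is to make sure the cherry-to-edge assignment is well defined and that the two inequalities $n_R \ge 2r$ and $C \le 2 n_R$ apply simultaneously to the same network; since both are purely structural statements about $N$, there is no obstacle in combining them. The main conceptual work has already been done in Lemmas~\ref{lemmaReducedCherry} and~\ref{lemmaCherryMap}, so the proof of the proposition itself should be a short arithmetic combination rather than a new argument.
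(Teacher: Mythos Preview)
Your proposal is correct and follows essentially the same route as the paper: combine Lemma~\ref{lemmaReducedCherry} and Lemma~\ref{lemmaCherryMap} to obtain $C \le 2n_R$, use $n_R \ge 2r$ to get $R = n_R - r \ge n_R/2$, and chain to $R \ge C/4$. The paper's argument is exactly this short arithmetic combination, with the minimum over all displaying networks left implicit.
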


Note that cherry bound is also valid when we restrict to binary tree-child networks.

\subsection{Another cherry-based lower bound}
We now present another efficiently computable lower bounds based the number of distinct cherries.

Let $N$ be a tree-child network displaying a set
${\cal T}$ of trees on $n$ taxa.  Let $C$ be the number of distinct cherries in the trees ${\cal T}$. Let $N$ have $r$ reticulations and $R$ denote the hybridization number of $N$. Then, the total in-degree  
of the $r$ reticulate nodes is 
$R+r$.   Then there are $(n-r-1)+(R+r)=n-1+R$ internal tree nodes.

Let $\ell_1$ and $\ell_2$ be two leaves of a cherry $Ch$ in a tree $T\in {\cal T}$. Then, $(\ell_1, p)\in {\cal E}(T)$ and $(\ell_2, p)\in {\cal E}(T)$ for some $p\in {\cal V}(T)$.  In the display of $Ch$ in $N$,  $p$ is mapped a tree node $\phi(p)$, $(\ell_1, p)$ and 
$(\ell_2, p)$ mapped to two node-disjoint paths $P_1$ and $P_2$. There are two possibilities:  
(i) $\ell_1$ and $\ell_2$ belong to one tree-node component and (ii) $\ell_1$ and $\ell_2$ are two different tree-node components. 

If $\ell_1$ and $\ell_2$ belong to 
a tree-node component, $\phi(p)$ is also in the same tree-node component. In this case, there are no other leaves below $\phi(p)$. Thus, $\phi(p)$ is uniquely determined by the two leaves. 

If $\ell_1$ and $\ell_2$ are in two different tree-node components, $\phi(p)$ and $\ell_1$ are in the same tree-node component, or
$\phi(p)$ and $\ell_2$ are in the same tree-node component.  Without loss of generality, we may assume the former holds. In this case, 
one child of $\phi(p)$ is the reticulate node on the top of the tree-node component $P$ containing $\ell_2$. Furthermore, $\ell_2$ is the only leaf in $P$. Therefore, 
$\phi(p)$ is also determined by the two leaves. 

In summary, we have proved the following property.

\begin{proposition}
 Let $v$ be a tree node of $N$ where a cherry in ${\cal T}$ is displayed at. Then, there are at most two leaves below $v$ in its tree-node component. If there are only two leaves $\ell_1$ and $\ell_2$ below $v$ in the tree-node component containing $v$, 
 only the cherry consisting of $\ell_1$ and $\ell_2$ can be displayed at $v$.  If there is only one leaf $\ell$ below $v$ in its tree-node component, then at most one cherry can be displayed at 
 $v$ under the condition that there is a unique leaf below the tree-node component rooted at the 
 reticulate child of $v$.
\end{proposition}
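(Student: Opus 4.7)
The plan is to turn the case analysis already sketched immediately before the proposition into a rigorous proof. I would fix notation as follows: let $p$ be the parent in some $T \in \mathcal{T}$ of the cherry leaves $\ell_1, \ell_2$, and let $\phi$ denote the embedding of $T$ into $N$ induced by the display choices, so that $\phi(p) = v$ and the two $T$-edges $(p,\ell_1), (p,\ell_2)$ map to internally node-disjoint directed paths $P_1, P_2$ starting at the two children of $v$ in $N$ and ending at $\ell_1, \ell_2$ respectively. Since $v$ is a non-leaf tree node, the tree-child property guarantees that at least one child of $v$ is a tree node, which already rules out the a priori possibility that no leaves lie below $v$ in its tree-node component.

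I would then split into the two cases used in the paragraphs preceding the proposition. In Case (i), both $\ell_1$ and $\ell_2$ lie in the tree-node component of $v$: then both children of $v$ are tree nodes and $P_1, P_2$ consist entirely of tree edges that stay inside the component. Since a tree-node component is itself a tree and $P_1, P_2$ are node-disjoint tree paths from the two children of $v$ ending at $\ell_1$ and $\ell_2$, the subtree of the component rooted at $v$ must be exactly $P_1 \cup P_2$; hence $\ell_1$ and $\ell_2$ are the only leaves of the component below $v$, and any cherry displayed at $v$ must be precisely $\{\ell_1, \ell_2\}$, because the two node-disjoint leaf-ending tree paths from the two children of $v$ within the component are uniquely $P_1$ and $P_2$.

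In Case (ii), $\ell_1$ and $\ell_2$ lie in distinct tree-node components. By the tree-child property $v$ has a tree child, and without loss of generality $P_1$ begins at this tree child while $P_2$ begins with a reticulate edge to a reticulate node $r$ that heads a new tree-node component $P$. Inside $v$'s component, $P_1$ reaches $\ell_1$ via tree edges only, and the argument of Case (i) applied to this single path shows $\ell_1$ is the unique leaf of $v$'s component below $v$. The portion of $P_2$ from $r$ to $\ell_2$ is a tree-edge path inside $P$, and since the displayed subtree rooted at $v$ has exactly two leaves, the leaf $\ell_2$ must in fact be the unique leaf of $P$. Any second cherry displayed at $v$ would again have to pair the unique leaf $\ell_1$ below $v$ in its own component with the unique leaf of the reticulate child's component, so at most one cherry can be displayed at $v$ in this case.

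The hard part will be Case (ii), specifically verifying that the reticulate child's tree-node component $P$ has exactly one leaf rather than merely containing $\ell_2$. This requires carefully translating ``the cherry $\{\ell_1,\ell_2\}$ is displayed at $v$'' into a constraint on the embedded subtree below $v$ in $N$, and then pulling this constraint back through the pruning of reticulate edges to exclude the existence of a tree-path from $r$ to a second leaf inside $P$. Once this uniqueness is established, the three assertions of the proposition follow by combining the two cases with the fact, ensured by tree-child, that $v$ cannot have two reticulate children.
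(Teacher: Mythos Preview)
Your plan is exactly the paper's own argument: the paragraphs immediately preceding the proposition \emph{are} its proof (the paper closes them with ``In summary, we have proved the following property''), and you are correctly fleshing out that two-case analysis.

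One clarification will save you effort. What you call the hard part---showing that the tree-node component $P$ rooted at the reticulate child of $v$ has a unique leaf---is not something the proposition asks you to prove: the third assertion is phrased ``under the condition that there is a unique leaf below the tree-node component rooted at the reticulate child of $v$'', so that uniqueness is a \emph{hypothesis}, not a conclusion. For the proposition as written you may simply assume it and conclude that the only possible cherry at $v$ is $\{\ell,\ell'\}$ with $\ell'$ that unique leaf. (You are right that the paper's preceding paragraph does assert this uniqueness, and that it is needed for the inequality $n-1+R\ge C$ that follows; but it drops out by the same mechanism you already use: every leaf of $P$ is reachable from $v$ via edges kept in the display, hence lies in $\{\ell_1,\ell_2\}$, and cannot be $\ell_1$ since $\ell_1$ sits in $v$'s own component.)

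A small gap to patch in Case~(i): the claim that the subtree of the component below $v$ ``must be exactly $P_1\cup P_2$'' does not follow merely from $P_1,P_2$ being node-disjoint paths from the two children of $v$---nothing yet rules out further branching inside the component. The actual reason is the cherry constraint: any leaf of $v$'s component below $v$ is reachable from $v$ by tree edges only and therefore appears below $v$ in \emph{every} displayed tree, so a third such leaf would prevent any cherry from being displayed at $v$. The same reasoning is what justifies your one-leaf claim on the $c_1$ side in Case~(ii).
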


By the above proposition,  distinct cherries  are displayed at different tree nodes in $N$, Therefore, 
$$n-1+R\geq C.$$
or $$R\geq C-n+1.$$

Experiments show that while this bound is easily computable, it is often not as strong as $TCLB_0$ especially when $n$ is relatively large.

%%%%%%%%%%%

\section{On upper bounds on tree-child reticulation number}

So far we have focused on lower bounds on tree-child reticulation number $\mathrm{TCR}_{\min}$. A natural research question is developing sharp upper bounds on  $\mathrm{TCR}_{\min}$. Existing methods (e.g., \citet{Zhang_22}) can compute an upper bound for a given set of trees. However, little theoretical results are known for the computed bounds. In this section, we provide some theoretical results on upper $\mathrm{TCR}_{\min}$. 

We consider a set of $K$ trees $\mathcal{T}$. We consider a pair of trees $T_1, T_2 \in \mathcal{T}$. We let the tree-child reticulation number of $T_1$ and $T_2$. It is known that for two trees, tree-child reticulation number is equal to unconstrained reticulation number (which is also called hybridization number in the literature) \citep{LINZSEMPLE2019}. Hybridization number for two trees have been studied actively in the literature (see, e.g., \citet{BLSS07,WUHYBRIDDIST09,LINZSEMPLE2019}, among others). There are algorithms that can practically compute the hybridization number for two trees (see, for example, \citet{BLSS07,WUHYBRIDDIST09} among others). So we assume the hybridization number of $T_1$ and $T_2$ is known. We now describe an upper bound on  $\mathrm{TCR}_{\min}$ that uses the pairwise hybridization number. First, we need the following lemma, which is based on a result in \citet{WZ22} (also in \citet{Zhang_22}). The proof is based on a related proof in \citet{WZ22}.

\begin{lemma}
\label{lemmaTwoTreeTCN}
[\citet{WZ22}] For any two rooted binary phylogenetic trees $T_1$ and $T_2$ (over the same $n$ taxa), there exists a tree-child network $N$ that displays $T_1$ and $T_2$ with at most $n-2$ reticulations. Moreover, for any ordering of path components, there exists such an $N$ with this ordering of the path components in $N$. 
\end{lemma}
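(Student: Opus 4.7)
The plan is to prove this lemma by an explicit inductive construction. Fix the prescribed ordering $\pi = (x_1, x_2, \ldots, x_n)$ of the taxa. I will build a tree-child network $N$ whose ordered path decomposition has paths $P_1, \ldots, P_n$ (with $P_i$ ending at leaf $x_i$) arranged left-to-right according to $\pi$. Path $P_1$ originates at the root of $N$. For each $i \ge 2$, path $P_i$ will usually start at a new reticulate node $r_i$ of in-degree exactly $2$, with one incoming edge routed from an earlier path to realize the position of $x_i$ in the restriction of $T_1$ to $\{x_1, \ldots, x_i\}$, and the other routed to realize its position in the restriction of $T_2$ to the same set. The display choice for $T_k$ ($k=1,2$) at each reticulate node is simply to keep the $T_k$-designated incoming edge.

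To achieve the sharp bound of $n-2$ rather than $n-1$ reticulations, one additional path beyond $P_1$ must be attached as a tree child of an existing node without a new reticulation. I would identify such a saving opportunity at the first index $i \ge 2$ for which $x_i$ shares a cherry with an already-attached leaf in at least one of the two trees, allowing $P_i$ to be attached as a tree child of the appropriate existing node rather than via a reticulate node. I would verify correctness by induction on $i$: after appending $P_1, \ldots, P_i$, the partial network is tree-child, and under the display choices above it displays both restrictions of $T_1$ and $T_2$ to $\{x_1, \ldots, x_i\}$. The base case is immediate; the inductive step follows because the two incoming edges of $r_i$ are chosen exactly to realize the correct sibling-subtree attachment of $x_i$ in each tree. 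The tree-child property is preserved because each new reticulate node's unique child is a tree node (the top of $P_i$), and every internal tree node has a tree child further along its own path (or an immediate leaf child). Counting then gives $n-2$ reticulate nodes of in-degree $2$, for a reticulation number of $2(n-2) - (n-2) = n-2$.

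The main obstacle is arguing that such a saving attachment can always be found for \emph{any} prescribed ordering $\pi$ and any pair of trees $T_1, T_2$, so that the construction realizes the bound uniformly. This requires a case analysis on the local relationship between $T_1$ and $T_2$ at each step of the construction, together with careful bookkeeping of the available attachment points on each previously-built path (since the tree-child constraint limits where a new tree child may be grafted in). I expect this step to occupy the bulk of the technical work, and to rely on the flexibility of independently choosing the two incoming edges of $r_i$ from any two earlier paths, which lets us simultaneously realize both trees' local topologies even when $\pi$ is adversarial with respect to one of them.
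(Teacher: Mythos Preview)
Your inductive leaf-addition scheme is a valid alternative to the paper's argument, which instead builds $N$ in one shot: it path-decomposes each $T_i$ with respect to the given ordering, concatenates the two $k$-th paths (prefixed by a reticulate node $r_k$ for $k\ge 2$) into a single vertical component $Q_k$, and then adds, for each $k\ge 2$, one ``red'' edge realising $T_1$'s attachment of its $k$-th path and one ``blue'' edge realising $T_2$'s. No induction or attachment bookkeeping is needed there, and the tree-child property is immediate from the shape of the $Q_k$. Your approach trades that explicitness for a cleaner conceptual picture of why each new leaf costs at most one reticulation.

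Where your proposal goes astray is the saving step. Searching for the first $i$ at which $x_i$ forms a cherry with an earlier leaf in $T_1$ or $T_2$ does not buy you a free attachment: even if $\{x_j,x_i\}$ is a cherry in $T_1$, grafting $P_i$ as a tree child at $x_j$'s parent still leaves the $T_2$-attachment of $x_i$ unaccounted for, and supplying it reintroduces the very reticulation you hoped to save. The saving is in fact immediate in your own framework and needs no case analysis: at $i=2$ the two restrictions $T_1|_{\{x_1,x_2\}}$ and $T_2|_{\{x_1,x_2\}}$ coincide (both are the unique two-leaf tree), so a single incoming edge to $P_2$ suffices and $r_2$ is never created. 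This holds for \emph{every} ordering $\pi$, so the ``main obstacle'' you anticipate simply does not arise. The paper records the same phenomenon from its side: both incoming edges of $r_2$ originate in $Q_1$, and since no other taxon lies between taxa $1$ and $2$ in the ordering they may be merged into one.
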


\begin{figure}[t!]
\centering
\includegraphics[scale=1.0]{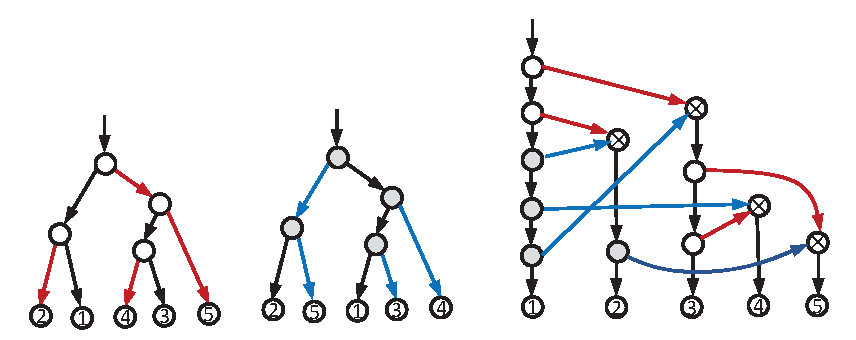}
\caption{Two trees $T_1$ (left), $T_2$ (middle) on four taxa (left) and the tree-child network (right), which displays the $T_1$ and $T_2$ simultaneously.
In $T_1$ (resp. $T_2$), the red (resp. blue) edges connect different paths of its decomposition. 
In each of the five tree components (vertical paths) of the network, the first node is its reticulate node; the unshaded vertices form the non-trivial paths appearing in the decomposition of $T_1$, while the shaded vertices form the non-trivial paths appearing in the decomposition of $T_2$. The red and blue reticulate edges correspond with the edges connecting different paths in the decomposition of $T_1$ and $T_2$, respectively.
\label{fig7_alg}
}
\end{figure}
\begin{proof}
Let $T_1$ and $T_2$ be two trees on $n$ taxa from $1$ to $n$. Without loss of generality, we order these $n$ taxa as $1, 2, \ldots, n$. 
We first decompose  $T_i$ into disjoints paths $P_{ik}$ ($1\leq k\leq n$) for $i=1, 2$ as follows. 
\begin{quote}
    1. $P_{i1}$ is the path consisting of the ancestors of leaf 1, together with the edges between them.
    
    2. For $k=2, \cdots, n$, $P_{ik}$ is the 
       the direct path consisting of the ancestors of leaf $k$ that do not belong to $\cup^{k-1}_{j=1}P_{ij}$ together with the edges between them.
\end{quote}
Let $p_i(k)$ be the parent of leaf $k$ in $T_i$. Note that $P_{i1}$ starts from the root of $T_i$ to $p_i(1)$. For $k\geq 2$, $P_{ik}$ is empty if $p_i(k)$ is in $\cup^{k-1}_{j=1}P_{ij}$ and non-empty otherwise. For example, for $T_1$ in Figure~\ref{fig7_alg}, $P_{11}$ is a 2-node path; $P_{12}$ is empty; $P_{13}$ is a 2-node path; and $P_{14}$ and $P_{15}$ are both empty.
We construct a tree-child network $N$ on $1$-$n$ with $n-1$ reticulate nodes (i.e. $n$ non-complex tree components) as follows. 

The first  component $Q_1$ of $N$ is obtained by connecting
$P_{11}$, $P_{21}$ and leaf 1 by edges (Figure~\ref{fig7_alg}). For $k>1$, 
the $k$-th component $Q_k$  is the concatenation of
a reticulate node $r_k$, $P_{1k}$, $P_{2k}$ and leaf $k$. Moreover, we connect the node that corresponds with the parent of 
the first node of $P_{ik}$ or $p_i(k)$ (if $P_{ik}$ is empty) to $r_k$ using (red or blue) edges for $i=1$  and $2$. In Figure~\ref{fig7_alg}, the red and blue reticulate edges are  added according to the path decomposition of $T_1$ and $T_2$, respectively.  

Since the edges not within a tree component are oriented from a node of  a tree component containing a leaf $i$ to the reticulate node of another tree component containing a leaf $j$ such that $i<j$, the resulting network is acyclic.  
It is easy to  see that the network is also tree-child. Moreover, 
 $T_1$ is obtained from $N$ if blue edges are removed and $T_2$ if red edges are removed.

The number of reticulations in $N$ is equal to the edges added in the algorithm above to connect $T_2$. Note that the first tree can be viewed as the ``tree part'' of $N$. Thus, the number of reticulations is $n-1$. Since there is no other taxa between taxa $1$ and $2$, we can only keep a \emph{single} edge from $1$ to $2$ (i.e., merge the two edges between $1$ and $2$ in Fig. \ref{fig7_alg}). This leads to $n-2$ reticulations.  
\end{proof} 

We now have the following upper bound on $\mathrm{TCR}_{\min}$.

\begin{theorem}
\label{theoremUpperboundTCN}
For $K$ phylogenetic trees $\mathcal{T}$ where there are two trees with $d$ as their pairwise hybridization number, then:

\[  \mathrm{TCR}_{\min} \le (K-2)(n-2) + d   \]
\end{theorem}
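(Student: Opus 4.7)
The plan is to build a tree-child network displaying all $K$ trees by starting with an economical network for the distinguished pair $(T_1, T_2)$ and then absorbing the remaining $K-2$ trees one at a time, paying at most $n-2$ extra reticulations per tree.

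First, I would use the known fact that for any two rooted binary phylogenetic trees the tree-child reticulation number equals the unconstrained hybridization number~\citep{LINZSEMPLE2019}. Applied to $(T_1, T_2)$, this yields a tree-child network $N^{(2)}$ displaying $T_1$ and $T_2$ with exactly $d$ reticulations. Then, for $i = 3, 4, \ldots, K$, I would build $N^{(i)}$ from $N^{(i-1)}$ by adding at most $n-2$ new reticulations, while preserving the tree-child property, the display of $T_1, \ldots, T_{i-1}$, and additionally displaying $T_i$. Summing the cost over the $K-2$ extensions gives
\[\mathrm{TCR}(N^{(K)}) \leq d + (K-2)(n-2),\]
which is exactly the bound claimed.

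The extension step is the core of the proof, and I would implement it by reusing the construction from Lemma~\ref{lemmaTwoTreeTCN}. Concretely, an ordered path decomposition of $N^{(i-1)}$ induces a total order on the taxa, say $1, 2, \ldots, n$. Decomposing $T_i$ relative to this order produces paths $P_{ik}$ (the ancestors of leaf $k$ in $T_i$ that have not already been placed), and for each non-empty $P_{ik}$ with $k \geq 2$, I would insert a single new reticulate edge into $N^{(i-1)}$ entering the tree component terminating at leaf $k$, playing the role of the blue edges in Lemma~\ref{lemmaTwoTreeTCN}. As in that lemma, the edges corresponding to leaves $1$ and $2$ can be merged, trimming one reticulation and yielding at most $n-2$ new reticulations per tree.

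The main obstacle will be verifying the three correctness conditions for each extension: (i) the tree-child property is preserved, (ii) all previously displayed trees remain displayed, and (iii) the new tree $T_i$ is displayed. Condition (ii) is essentially immediate because we only add edges and reticulate nodes, never remove any. Condition (iii) follows because the inserted edges realize the path decomposition of $T_i$ over the chosen ordering. Condition (i) is the delicate one: I would argue, mirroring Lemma~\ref{lemmaTwoTreeTCN}, that each newly added reticulate node is inserted at the top of a fresh path whose downstream edges are tree edges leading to a leaf, so its tree child is automatic, and that no existing tree node loses its tree child because the new edges always point to either a new reticulate node or an existing reticulate node, never converting a tree node into a reticulate one.
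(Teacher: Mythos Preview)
Your proposal is correct and follows essentially the same approach as the paper's proof: invoke \citet{LINZSEMPLE2019} to obtain a tree-child network with $d$ reticulations for the distinguished pair, then absorb the remaining $K-2$ trees one by one via the path-stacking construction of Lemma~\ref{lemmaTwoTreeTCN}, at a cost of at most $n-2$ reticulations each. The only cosmetic difference is that the paper first extracts a fixed taxon order $O$ from an ordered path decomposition of the Linz--Semple network and then rebuilds everything from scratch with respect to $O$ (arguing that stacking $T_2$ on $T_1$ in that order costs exactly $d$), whereas you keep the Linz--Semple network itself as $N^{(2)}$ and stack directly onto it; both routes lead to the same final object and the same arithmetic.
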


\begin{proof}
We first observe that Lemma \ref{lemmaTwoTreeTCN} can be naturally extended to $K$ trees. Intuitively, we can ``stack'' one tree after another using the constructive procedure in Lemma \ref{lemmaTwoTreeTCN}. Here, we use the same order of paths for the path decomposition of all trees in $\mathcal{T}$. This implies there is a tree-child network for the $K$ trees in $\mathcal{T}$ with at most $(K-1)(n-2)$ reticulations.  

Let $T_1$ and $T_2$ be two trees in $\mathcal{T}$ whose hybridization number is $d$. By \citet{LINZSEMPLE2019}, there exists a tree-child network $N$ with $d$ reticulations that displays $T_1$ and $T_2$. We consider a topological order $O$ of the path components of $N$. Now we build $N'$ that displays all trees in $\mathcal{T}$ by ``stacking'' each $T_i$ into $N'$ using the algorithm in Lemma \ref{lemmaTwoTreeTCN}. Here, we start with $T_1$ and $T_2$ as the first two trees to add into $N'$. Also, all trees in $\mathcal{T}$ are decomposed into paths with regarding to the order $O$. Therefore, we need $d$ reticulations to ``stack'' $T_2$ on top of $T_1$. By Lemma \ref{lemmaTwoTreeTCN}, ``stacking'' each additional $T_i$ ($3 \le i \le K$) needs at most $n-2$ reticulations. 
\end{proof}

%-----------------------

%%%%%%%%%%%

\section{Results}

We have implemented the lower bounds in the program PIRN, which is downloadable from \url{https://github.com/yufengwudcs/PIRN}. To compute the $\mathrm{TCLB}_1$ and $\mathrm{TCLB}_2$ bounds, PIRN uses \textit{GLPK}, an open-source ILP solver by default. While GLPK can practically compute $\mathrm{TCLB}_1$ for most data we tested, it becomes slow for computing $\mathrm{TCLB}_2$ for relatively large data. Our experience shows that $\mathrm{TCLB}_2$ can be practically computed using Gurobi, a more powerful ILP solver, even the data becomes relatively large. However, Gurobi is not open-source. In order to support Gurobi, PIRN outputs the ILP formulation in a file which can be loaded into Gurobi so that $\mathrm{TCLB}_2$ can be computed in an interactive way. 
The results we presented below were computed using Gurobi in this interactive approach.

\subsection{Simulation data}

To test the performance of lower bounds, we use the simulation data analyzed in \citet{Zhang_22}. The simulation data were generated using the approach first developed in \citet{WURN10}. Briefly, we first produced reticulate networks using a simulation scheme
similar to the well-known coalescent simulation backwards in time. At each step, there are two possible events: (a) lineage merging (which corresponds to speciation), and (b) lineage splitting (which corresponds to reticulation). The relative frequency of these two events (denoted as $r$) influences the level of reticulation in the simulated network: a larger $r$ will lead to more reticulation events in simulation. The following lists the simulation parameters.

%% Table 1
\begin{table}
\begin{center}
 \caption{\label{tableParam}\sl A list of parameters and their default values used in the simulation.}
{
       \begin{tabular}{ l | c | c }
       \hline
       Description & Symbol & Simulated values (default: boldface) \\ \hline \hline
       Number of taxa & $n$ & $\mathbf{10}, 20, 50$ \\ \hline
      Reticulation level & $r$ & $1.0, \mathbf{3.0}, 5.0$ \\ \hline
       Number of gene trees  &  $K$    &  $10, 50$ \\ \hline
  \hline
       \end{tabular}
}
\end{center}
\end{table}

We used the average over ten replicate data for each simulation settings. 
The following three lower bounds (all developed in this paper) were evaluated:
\begin{enumerate}
\item $\mathrm{TCLB}_0$: the cherry bound
\item $\mathrm{TCLB}_1$: the practically computable bound by ILP.
\item $\mathrm{TCLB}_2$: slower to compute by ILP but usually more accurate bound.
\end{enumerate}

In order to measure the accuracy of lower bounds, ideally we want to compare with the exact tree-child reticulation number. However, these methods tend to be slow for the data we tested. Therefore, we use the following two heuristic upper bounds instead as a rough estimate on tree-child reticulation number.

\begin{enumerate}
\item ALTS. This method calculates a heuristic upper bound on tree-child reticulation number.
\item PIRNs. Note: PIRNs outputs a unconstrained network. 
Since the output network may not be optimal, its reticulation number can occasionally be smaller than the computed lower bounds for tree-child reticulation number.  
But this is rare.
\end{enumerate}

We use the following statistics for benchmarking various methods. 

\begin{enumerate}
\item Average value of the (lower/upper) bounds.
\item For each lower bound, the average percentage of differences between a lower bound $LB$ and the ALTS bound $UB_a$:  $\frac{ UB_a-LB}{UB_a}$. 
\item Running time (in seconds).
\end{enumerate}

\begin{figure*}[ht]
\centering
\subfigure[Average lower/upper bound values] % caption for subfigure a
{
    \label{fig:res10tax:val}
%\scalebox{0.33}
{
  \includegraphics[width=1.6in]{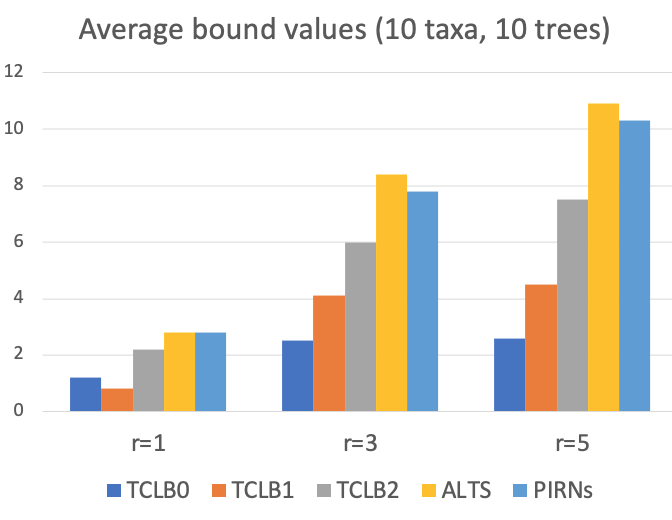}
}
}
\hspace*{5ex} 
\subfigure[Average gap between lower and the ALTS bound (normalized by ALTS) ] % caption for subfigure a
{
    \label{fig:res10tax:gap}
%\scalebox{0.33}
{
  \includegraphics[width=1.6in]{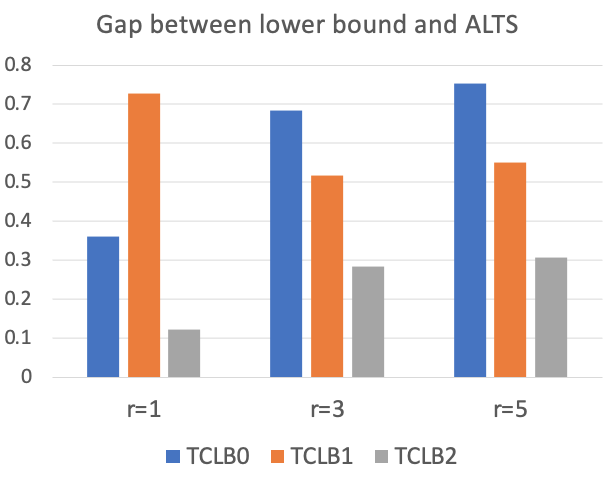}
}
}   
\caption{\label{fig:res10taxa}\sl Closeness of three lower bounds ($\mathrm{TCLB}_0$, $\mathrm{TCLB}_1$ and $\mathrm{TCLB}_2$) on 10 trees over 10 taxa under three reticulation levels $r=1, 3, 5$.  Two upper bounds, ALTS and PIRNs, are used for comparison.
Part \ref{fig:res10tax:val} shows the average lower/upper bound values. 
Part \ref{fig:res10tax:gap} shows the gap between each of the tree lower bounds and the ALTS bound (divided by the ALTS bounds).
}
%\vspace{-5pt}
\end{figure*}

Figure \ref{fig:res10taxa} shows the performance of the tree lower bounds, $\mathrm{TCLB}_0$, $\mathrm{TCLB}_1$ and $\mathrm{TCLB}_2$ on relatively small data (ten gene trees over ten taxa). Our results show that $\mathrm{TCLB}_2$ clearly outperforms the other two lower bounds in terms of accuracy. At lower reticulation level ($r=1$), the gap between $\mathrm{TCLB}_2$ and ALTS is only a little over $10\%$. At higher reticulation levels, the gap between $\mathrm{TCLB}_2$ and ALTS is larger but is still much smaller than the other two lower bounds. 
 Recall that ALTS is restricted to tree-child network while PIRNs works with unconstrained networks.

\begin{figure*}[ht]
\centering
\subfigure[Average lower/upper bound values for larger data] % caption for subfigure a
{
    \label{fig:resvartaxa:val}
%\scalebox{0.33}
{
  \includegraphics[width=1.9in]{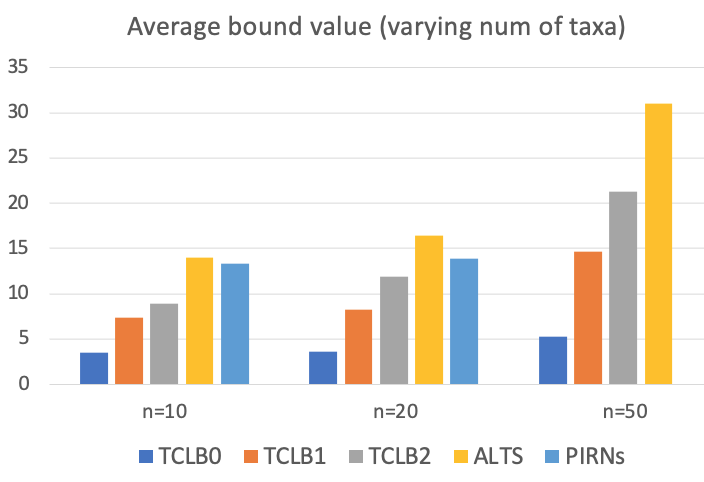}
}
}
\hspace*{5ex} 
\subfigure[Average gap between lower and the ALTS bound (normalized by ALTS) ] % caption for subfigure a
{
    \label{fig:resvartaxa:gap}
%\scalebox{0.33}
{
  \includegraphics[width=1.6in]{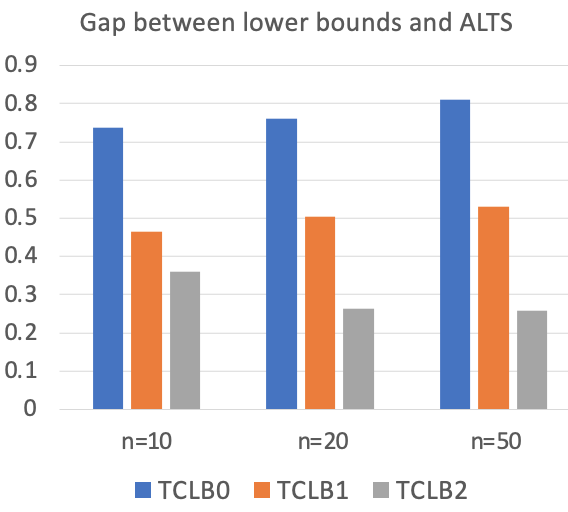}
}
}   
\caption{\label{fig:resvartaxa}\sl Performance of three lower bounds ($\mathrm{TCLB}_0$, $\mathrm{TCLB}_1$ and $\mathrm{TCLB}_2$) on larger data.  Reticulation level: $r=3$. $50$ gene trees. Vary the number of taxa ($n$): 10, 20 and 50.  Two upper bounds, ALTS and PIRNs, are used for comparison. PIRNs is too slow for n=50, and no result is given for this setting.
Part \ref{fig:resvartaxa:val} shows the average lower/upper bound values. 
Part \ref{fig:resvartaxa:gap} shows the gap between each of the tree lower bounds and the ALTS bound (divided by the ALTS bounds).
}
%\vspace{-5pt}
\end{figure*}

%We now evaluate the performance
We also examined the closeness
of the lower bounds on larger data. 
We simulated $50$ gene trees with varying number of taxa: $10, 20$ and $50$. Our results (Figure \ref{fig:resvartaxa}) show that $\mathrm{TCLB}_2$ still performs the best among the three lower bounds in term of the accuracy.

\noindent \textbf{Time to compute the bounds}
Figure \ref{fig:runtime} shows the running time to compute the bounds. We vary the reticulation levels (which may lead to networks with different number of reticulations), and also the number of taxa.
Our results show that computing $\mathrm{TCLB}_2$ takes longer time than the other two bounds. All lower bounds are faster to compute than the two upper bounds. 
ALTS is more efficient than PIRNs, while the ALTS bounds tend to be larger than the PIRNs bounds. PIRNs cannot be applied on large data (say $n=50$). ALTS also appears to be close to its practical range when $n=50$: there is one instance where ALTS failed to complete the computation by exhausting the memory in a Linux machine with 64 G memory).

\begin{figure*}[ht]
\centering
%\vspace{-20pt}
%\centerline{\scalebox{0.7}{\psfig{figure=exILS.eps,width=4.5in}}} 
\subfigure[Average run time for computing the lower/upper bounds with three different reticulation levels ] 
{
    \label{fig:runtime:time1}
%\scalebox{0.33}
{
  \includegraphics[width=1.8in]{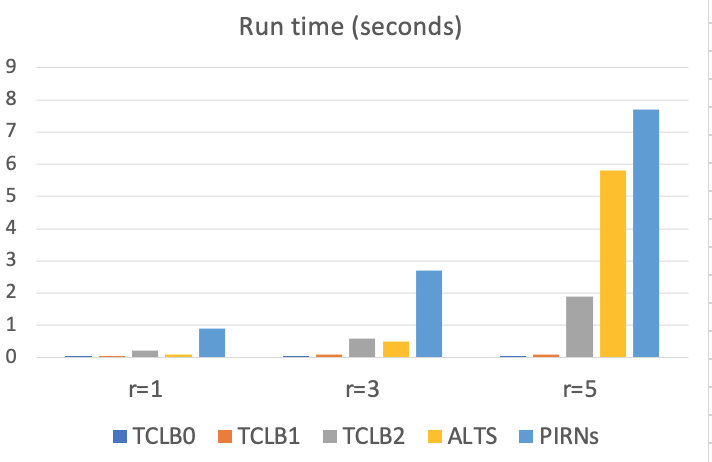}
}
}  
\hspace*{5ex} 
\subfigure[Average run time for computing the lower/upper bounds with varying numbers of taxa ] 
{
    \label{fig:runtime:time2}
%\scalebox{0.33}
{
  \includegraphics[width=1.8in]{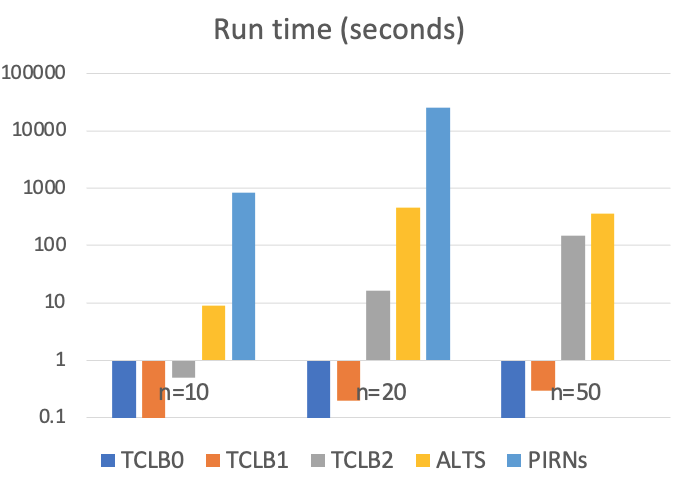}
}
}    \\
%\vspace{-10pt}
\caption{\label{fig:runtime}\sl Running time (in seconds) to compute three lower bounds ($\mathrm{TCLB}_0$, $\mathrm{TCLB}_1$ and $\mathrm{TCLB}_2$).  Two upper bounds, ALTS and PIRNs, are used for comparison.
Part \ref{fig:runtime:time1} shows the average run time (in seconds) for on 10 trees over 10 taxa under three reticulation levels $r=1, 3, 5$. 
Part \ref{fig:runtime:time2} shows the average run time (in seconds) for varying numbers of taxa ($n$): 10, 20 and 50 (reticulation level fixed at $r=3$ and $50$ gene trees). 
}
%\vspace{-5pt}
\end{figure*}

\subsubsection{More on large data}

$\mathrm{TCLB}_0$ can be easily computed for large data because it is based on simple properties of input trees and can be easily computed in polynomial time. While we don't have a polynomial time algorithm for computing $\mathrm{TCLB}_1$, our experience shows that $\mathrm{TCLB}_1$ can usually be easily computed even when only an open source ILP solver such as GLPK is used. This can be seen from Figure \ref{fig:runtime}.  
%\todo{Zhang:  Figure 5?}

$\mathrm{TCLB}_2$ can be practically computed using a state-of-the-art ILP solver such as Gurobi for moderately large data (e.g., $50$ gene trees with $50$ taxa). As an example, on a dataset with 50 trees (each with 50 taxa),  a lower bound of 16 is computed within a few seconds using Gurobi. The $\mathrm{TCLB}_1$ bound of $8$ blue was computed in a fraction of seconds even with an open source ILP solver. 
PIRNs took 10 hours to compute a unconstrained network with 20 reticulations.
ALTS took over 10 minutes to find a tree-child network with 23 reticulations. While the lower bound doesn't match the best upper bound, the lower bound can provide a range of the solution for large data. We note that Gurobi usually computes $\mathrm{TCLB}_2$ much faster than GLPK. Unless the data is small (say with 10 taxa or less), we recommend to use Gurobi.

To test its scalability, we simulated $50$ gene trees with $100$ taxa. $\mathrm{TCLB}_1$ can still be practically computed in less than one second even using GLPK.  $\mathrm{TCLB}_2$ can be computed using Gurobi, but in a long time. As an example, it took over 10 hours for obtaining $\mathrm{TCLB}_2 = 48$ on a dataset with $50$ simulated tree over $100$ taxa. In contrast, $\mathrm{TCLB}_1 = 37$ and $\mathrm{TCLB}_0=15$. Our experience shows that for very large data, the difference between $\mathrm{TCLB}_1$ and $\mathrm{TCLB}_2$ is not very large. Therefore, $\mathrm{TCLB}_1$ can provide a quick estimate on the reticulation number since it can be practically computed for large data, In fact, $\mathrm{TCLB}_1$ is perhaps the only practical method that can provide a reasonable strong estimate on reticulation for large data. We are not aware of any other existing approaches for estimating either a lower or upper bound that can be computed for the large simulated data we use here. Here, the large dataset mentioned here has $50$ gene trees, $100$ taxa and is simulated using reticulation parameter $r=3.0$ (which can lead to a tree-child reticulation number of over $40$).

\subsection{Real biological data}

To evaluate how well our bounds work for real biological data,
we test our methods on a grass dataset.
The dataset was originally from the Grass Phylogeny Working
Group \citet{GRASS01} and has been analyzed by a number of papers on phylogenetic networks. There are some variations in the exact form of data, depending on the preprocessing steps performed. The grass data we analyze here have five trees over $14$ taxa. Earlier analyses focus on calculating the so-called subtree prune and regraft distances between pairs of these trees \citet{BLSS07,WUSPR09,WUHYBRIDDIST09}. The first attempt for reconstructing phylogenetic network for all five trees is \citet{WURN10}. In \citet{WURN10}, the (unconstrained) reticulation number of these fives tree are known to be between $11$ (lower bound) and $13$ (upper bound). The upper bound was improved to $12$ by PIRNs \citep{WUMIRRN16}. Regarding to tree-child reticulation number, ALTS found a tree-child network with $13$ reticulations. No non-trivial lower bounds for tree-child reticulation number for these five grass trees are known before. 

We compute the three lower bounds on the five grass trees. The cherry bound $\mathrm{TCLB}_0$ is $2$, while the fast ILP bound $\mathrm{TCLB}_1$ is $3$. These two bounds can be calculated very fast but obviously the bounds are not very precise. It takes $75$ seconds to compute $\mathrm{TCLB}_2$ using Gurobi, which gives a lower bound of $11$. This matches the lower bound in \citet{WURN10}. Note that the lower bound in \citet{WURN10} is based on pairwise distances between the five trees, and takes much longer time to compute: when the number of tree increases, that bound becomes more difficult to compute. Although $\mathrm{TCLB}_2$ just provides the same bound as \citet{WURN10}, it is close to the currently best upper bound ($13$). Our results show that $\mathrm{TCLB}_2$ can indeed produce good estimates on tree-child reticulation number.

\section{Conclusion}

Our results show that the
lower bounds (especially $\mathrm{TCLB}_0$ and $TCLB_1$) are faster to compute than existing upper bounds (namely ALTS) on large data.  Our results show that there are trade-offs in accuracy and efficiency when computing lower bounds. The $\mathrm{TCLB}_2$ bound is the most accurate, but is also the slowest to compute. The simple cherry bound is very easy to calculate but usually is not very accurate. For large trees, the fast ILP-based $\mathrm{TCLB}_1$ bound may be a good choice to obtain quick estimate on tree-child reticulation number. We note that upper bound heuristics such as ALTS can construct a plausible phylogenetic network for the given gene trees, while lower bounds only provide a range of the reticulation number. Still, our lower bounds can provide quick estimate about the reticulation level of a set of phylogenetic trees for large data which is beyond the current feasibility range of existing upper bound methods.

Regarding to upper bounds, Theorem \ref{theoremUpperboundTCN} also gives an upper bound for hybridization number of $\mathcal{T}$, since a tree child network is a special case of hybridization network. However, reticulation number (with or without the tree-child condition) of three or more trees is still poorly understood. We are not aware of stronger upper bound than the bound in Theorem \ref{theoremUpperboundTCN} for three or more trees.
  
The tree-child network model often allows faster computation. The lower bounds on tree-child reticulation number are much faster to compute than lower bounds \citep{WURN10} on the general reticulation number.  
There are a number of open questions about lower bounds for tree-child reticulation number.  For example, is there a polynomial time algorithm for computing the $\mathrm{TCLB}_1$ bound?  Can one develop a new lower bound that has better (or similar) accuracy as $\mathrm{TCLB}_2$ and is faster to compute?

\subsubsection{Acknowledgments} Research is partly supported by U.S. NSF grants CCF-1718093 and IIS-1909425 (to YW) and Singapore MOE Tier 1 grant R-146-000-318-114 (to LZ). The work was started while YW was visiting the Institute for Mathematical
Sciences of National University of Singapore in April 2022, which was partly supported by
grant R-146-000-318-114.

\newpage

\end{document}